\documentclass[aps,prx,superscriptaddress,twocolumn,longbibliography,floatfix]{revtex4-2}
\usepackage{units}
\usepackage{amsmath,braket}
\usepackage{amsthm}
\usepackage{amssymb}
\usepackage{graphicx}
\usepackage{color}
\usepackage{xcolor}
\usepackage{bbold}
\usepackage{apptools}
\usepackage{appendix}
\usepackage{hhline}

\definecolor{myurlcolor}{rgb}{0,0,0.7}
\definecolor{myrefcolor}{rgb}{0.1,0,0.9}

\usepackage[
	breaklinks,
	pdftex,
	colorlinks=true, 
	linkcolor=myrefcolor,
	citecolor=myurlcolor,
	urlcolor=myurlcolor
]{hyperref}

\newcommand{\SMLong}{Appendix}

\newcommand{\SM}{Appendix}

\newtheorem{theorem}{Theorem}

\AtAppendix{\counterwithin{theorem}{section}}
\AtAppendix{\counterwithin{lemma}{section}}
\AtAppendix{\counterwithin{fact}{section}}
\AtAppendix{\counterwithin{definition}{section}}

\usepackage[linesnumbered, ruled,vlined]{algorithm2e}

\graphicspath{{./images/},{./imagesAppendix/}}

\renewcommand{\eqref}[1]{Eq.~(\ref{#1})} %

\def\app#1#2{%
  \mathrel{%
    \setbox0=\hbox{$#1\sim$}%
    \setbox2=\hbox{%
      \rlap{\hbox{$#1\propto$}}%
      \lower1.1\ht0\box0%
    }%
    \raise0.25\ht2\box2%
  }%
}

\ifx\proof\undefined
\newenvironment{proof}[1][\protect\proofname]{\par
	\normalfont\topsep6\p@\@plus6\p@\relax
	\trivlist
	\itemindent\parindent
	\item[\hskip\labelsep\scshape #1]\ignorespaces
}{%
	\endtrivlist\@endpefalse
}
\providecommand{\proofname}{Proof}
\fi
\makeatother

\newcommand{\idg}[1]{{\bfseries #1)}}

\newcommand\numberthis{\addtocounter{equation}{1}\tag{\theequation}}

\providecommand{\factname}{Fact}
\providecommand{\theoremname}{Theorem}
\providecommand{\claimname}{Claim}
\providecommand{\lemmaname}{Lemma}
\providecommand{\definitionname}{Definition}
\providecommand{\corollaryname}{Corollary}
\providecommand{\conjecturename}{Conjecture}

\definecolor{THc}{rgb}{0.9,0.3,0.2}

\newcommand{\revA}[1]{{#1}}

\usepackage{physics}

\newcommand{\subfigimg}[3][,]{%
	\setbox1=\hbox{\includegraphics[#1]{#3}}%
	\leavevmode\rlap{\usebox1}%
	\rlap{\hspace*{2pt}\raisebox{\dimexpr\ht1-0.5\baselineskip}{{\bfseries \large\textsf{#2}}}}%
	\phantom{\usebox1}%
}

\newcommand{\sectionMain}[1]{
\let\oldaddcontentsline\addcontentsline%
\renewcommand{\addcontentsline}[3]{}%
\section{#1}
\let\addcontentsline\oldaddcontentsline
}

\begin{document}

\title{Efficient mutual magic and magic capacity with matrix product states}

\author{Poetri Sonya Tarabunga}
\email{poetri.tarabunga@tum.de}
\affiliation{Technical University of Munich, TUM School of Natural Sciences, Physics Department, 85748 Garching, Germany}
\affiliation{Munich Center for Quantum Science and Technology (MCQST), Schellingstr. 4, 80799 München, Germany}

\author{Tobias Haug}
\email{tobias.haug@u.nus.edu}
\affiliation{Quantum Research Center, Technology Innovation Institute, Abu Dhabi, UAE}

\begin{abstract}
Stabilizer R\'enyi entropies (SREs) probe the non-stabilizerness (or `magic')  of many-body systems and quantum computers.
Here, we introduce the mutual von-Neumann SRE and magic capacity, which can be efficiently computed in time $O(N\chi^3)$ for matrix product states (MPSs) of bond dimension $\chi$. 
We find that mutual SRE characterizes the critical point of ground states of the transverse-field Ising model, independently of the chosen local basis.
Then, we relate the magic capacity to the anti-flatness of the Pauli spectrum, which quantifies the complexity of computing SREs. The magic capacity characterizes transitions in the ground state of the Heisenberg and Ising model,  randomness of Clifford+T circuits, and distinguishes typical and atypical states. 
Finally, we make progress on numerical techniques: we design two improved Monte-Carlo algorithms to compute the mutual $2$-SRE, overcoming limitations of previous approaches based on local update. We also give improved statevector simulation methods for Bell sampling and SREs with $O(8^{N/2})$ time and $O(2^N)$ memory, which we demonstrate for $24$ qubits. 
Our work uncovers improved approaches to study the complexity of quantum many-body systems.
\end{abstract}

\maketitle

 \let\oldaddcontentsline\addcontentsline%
\renewcommand{\addcontentsline}[3]{}%

\section{Introduction}

Nonstabilizerness, also known as magic, is recognized as a key resource for quantum computing and a necessary condition for demonstrating quantum advantage~\cite{bravyi2005universal,veitch2014resource}. It is based on the notion of stabilizer states~\cite{gottesman1997stabilizer}, which are those generated by Clifford unitary operators and which play a key role in quantum computation and error correction~\cite{nielsen2011quantum,kitaev2003fault,eastin2009restriction}.  

Essentially, nonstabilizerness quantifies the degree to which a quantum state cannot be approximated by a stabilizer state. Given that stabilizer states are efficiently simulable classically~\cite{gottesman1998theory,gottesman1998heisenberg,aaronson2004improved}, nonstabilizerness determines the minimal non-Clifford resources required for universal quantum computation~\cite{kitaev2003fault,eastin2009restriction}. It also provides a lower bound on these resources and is indicative of the complexity of classical simulation algorithms using techniques based on the stabilizer formalism~\cite{veitch2014resource,chitambar2019quantum}. Consequently, the quantitative characterization of nonstabilizerness, especially in many-body contexts, is of paramount importance.

Several measures of magic have been proposed in quantum information theory, such as the min-relative entropy of magic~\cite{bravyi2019simulation,liu2022many}, the relative entropy of magic~\cite{veitch2014resource}, the robustness of magic~\cite{howard2017robustness,liu2022many}, and the basis-minimized stabilizerness asymmetry~\cite{tarabunga2024nonstabilizerness}. However, the computation of these measures involve complex minimization procedures, making their computation a significant challenge in many-body systems. To address this problem, stabilizer Rényi entropies (SREs)~\cite{leone2022stabilizer} were introduced, providing efficiently accessible magic measures for both numerical~\cite{haug2022quantifying, haug2023stabilizer,lami2023quantum, tarabunga2023many,tarabunga2024nonstabilizernessmps, tarabunga2023magic, liu2024non,ding2025evaluating} and experimental studies~\cite{haug2022scalable, haug2023efficient, oliviero2022measuring,bluvstein2024logical,haug2025efficientwitnessingtestingmagic}. SREs have found importance in many-body phenomena~\cite{liu2022many}, such as phase transitions~\cite{oliviero2022magic,haug2022quantifying,tarabunga2023many,tarabunga2023magic,falcao2024nonstabilizerness,ding2025evaluating,sinibaldi2025nonstabilizernessneuralquantumstates,timsina2025robustnessmagic}, quantum chaos~\cite{lami2024quantum,leone2021quantum,garcia2022resource,haug2024probing,turkeshi2024magic}, quantum dynamics~\cite{tirrito2024anticoncentration,falcao2025magicdynamics,dowling2025magicheisenbergpicture,tirrito2025universalspreading}, transition in monitored circuits~\cite{fux2023entanglementmagic,bejan2023dynamical,tarabunga2024magictransition}, and the structure of entanglement~\cite{tirrito2023quantifying,gu2024magic,frau2024non,frau2024stabilizer,dowling2025bridging,turkeshi2023measuring}.

In addition to the study of magic in the full state, recently there has been a growing interest in the characterization of more refined aspects of magic.
In particular, there has been substantial interest in nonlocal magic, which represents the nonstabilizerness that resides within the correlations between the subsystems~\cite{tarabunga2023many,huang2024non,qian2025quantumnonlocalmagic,tarabunga2024magictransition,cao2024gravitational,white2021conformal,sarkar2020characterization,fliss2021,korbany2025longrangenonstabilizerness,wei2025longrangenonstabilizerness}. %
Although nonlocal magic is ideally defined using a mixed-state magic monotone, the mutual 2-SRE~\cite{tarabunga2023many} has been hypothesized to capture some form of nonlocal magic. While the SRE is not a proper magic monotone for mixed states, it can be used to construct mixed state magic witness~\cite{haug2025efficientwitnessingtestingmagic}. Notably, in some cases the mutual 2-SRE was found to display the same behavior as the mutual magic defined using genuine mixed state magic monotones~\cite{tarabunga2024magictransition}. It is also appealing because it is relatively easier to compute, allowing for numerical~\cite{tarabunga2023many,frau2024non,frau2024stabilizer} and analytical~\cite{lopez2024exact,hoshino2025stabilizer} studies of nonlocal magic in many-body quantum systems, particularly by applying tensor network methods. As the name suggests, the definition of mutual 2-SRE is directly reminiscent of mutual information of entanglement, and is thus expected to be free from UV divergences from the point of view of field theory~\cite{hoshino2025stabilizer}. Its connection to physical phenomena has also been revealed~\cite{tarabunga2023many,tarabunga2024magictransition,frau2024stabilizer,lopez2024exact,hoshino2025stabilizer}.

However, the mutual 2-SRE is still relatively expensive to compute for matrix product states (MPS), with the exact calculation scaling as $O(N\chi^{12})$~\cite{haug2022quantifying}, where $\chi$ is the bond dimension of the MPS. Approximate contraction schemes and sampling-based techniques have also been proposed~\cite{tarabunga2024nonstabilizernessmps,tarabunga2023many,tarabunga2024critical}, with the latter mainly relying on Monte Carlo algorithms with local update rules. However, their convergence is not guaranteed, and their applicability needs to be tailored to the specific model under consideration. Overall, the existing approaches still
face limitations in their general applicability and computational efficiency.

Here, we present an efficient characterization of various magic properties. First, we introduce the mutual von-Neumann SRE to characterize nonlocal magic where we find two possible definitions. For MPS, the mutual von-Neumann SRE can be efficiently calculated via perfect sampling in $O(N\chi^3)$ time, where $\chi$ is the bond dimension and $N$ the number of qubits. 
Additionally, we provide two new algorithms to compute the mutual $2$-SRE  based on Metropolis-Hastings Monte-Carlo. We show that mutual SREs are able to characterize the critical point of the transverse field Ising model (TFIM) in arbitrary local basis, overcoming the limitation of the bare SRE. 

Further, we introduce the magic capacity $C_M$ in analogy to the entanglement capacity~\cite{deBoer2019aspects}, and show that it is connected to the anti-flatness of the Pauli spectrum. We efficiently compute the capacity of magic for MPS in $O(N\chi^3)$ time, and show that it describes the sampling complexity to compute the von-Neumann SRE. We find that phases in Clifford+T circuits as well the ground state of the Heisenberg model are characterized by the scaling of the magic capacity with $N$. Notably, we find that magic capacity provides an efficient tool to discriminate between typical and atypical states. 

Finally, we provide an improved algorithm to perform Pauli and Bell sampling, which we apply to compute the von-Neumann SRE and magic capacity for statevector simulations of arbitrary states. Our method scales as $O(8^{N/2})$ time and $O(2^N)$ memory (in contrast to naive sampling with $O(8^N)$ time and $O(4^N)$ memory) allow simulation of at least $24$ qubits.
Our work enhances the characterization of different aspects of magic of many-body quantum systems and quantum circuits. 

The rest of the paper is structured as follows. We define general $\alpha$-SREs for mixed states in Sec.~\ref{sec:sre}. Next, we introduce the mutual von-Neumann SRE and magic capacity in Sec.~\ref{sec:magic_quantities}.  In Sec.~\ref{sec:num_methods}, we describe efficient numerical methods based on Pauli sampling, both for MPS and statevector simulations. In Sec.~\ref{sec:num_results}, we present our numerical results on both Clifford+T circuits and ground state phase transitions. Finally, we discuss our results in Sec.~\ref{sec:discussion}.
In Table~\ref{tab:comparison}, we summarize the complexity of algorithms to compute SRE $M_\alpha$, mutual $\alpha$-SRE $\mathcal{I}_\alpha$ and magic capacity $C_\text{M}$ for MPS, statevector simulation and quantum computers.

\begin{table}[htbp]\centering
\begin{tabular}{|c|c| }
                \hline
     Index & Time \\
    \hhline{|=|=|}
    \multicolumn{2}{|c|}{$\alpha$-SRE $M_\alpha$ via MPS}\\
    \hline
     $\alpha=1$ & $O(N\chi^3 C_M \epsilon^{-2})$~\cite{haug2023stabilizer,lami2023quantum} \\
     integer $\alpha>1$ & $O(N\chi^{6\alpha})$~\cite{haug2022quantifying} \\
    \hline
        \multicolumn{2}{|c|}{Mutual $\alpha$-SRE $\mathcal{I}_\alpha$ via MPS}\\
    \hline
     $\alpha=1$ & $O(N\chi^3 C_M\epsilon^{-2})$  \\
       $\alpha=2$ & $O(N\chi^{12})$~\cite{tarabunga2023many} \\
    \hline
        \multicolumn{2}{|c|}{Magic capacity $C_\text{M}$ via MPS}\\
    \hline
     $C_\text{M}$ & $O(N\chi^3\epsilon^{-2})$  \\
    \hline
    \hline
        \multicolumn{2}{|c|}{$M_\alpha$ and $C_\text{M}$ via statevector simulation}\\
            \hline
     any $\alpha$ & $O(8^N)$~\cite{leone2022stabilizer} \\
    $\alpha=1$, $C_\text{M}$ & $O(8^{N/2}\epsilon^{-2})$\\
    \hline
\end{tabular}
\caption{Complexity of computing SREs $M_\alpha$ and mutual SRE $\mathcal{I}_\alpha$ with R\'enyi index $\alpha$, as well as magic capacity $C_\text{M}$. We consider algorithms for matrix product states (MPS) and statevector simulation.  $N$ is the number of qubits, $\chi$ is the bond dimension of matrix product state, $\epsilon$ the additive accuracy. }
\label{tab:comparison}
\end{table}

\revA{\section{SRE} \label{sec:sre}
For $N$-qubit pure states $\ket{\psi}$, magic can be measured using the SRE~\cite{leone2022stabilizer}, defined as
\begin{align*}
M_\alpha(\ket{\psi})=&H_\alpha[p_{\ket{\psi}}]-N\ln 2=\\
&\frac{1}{1-\alpha}\ln\left(2^{-N}\sum_{P\in\mathcal{P}_N} \bra{\psi}P\ket{\psi}^{2\alpha}\right)\,,
\end{align*}
where $\mathcal{P}_N$ is the set of $4^N$ (unsigned) Pauli strings which are tensor products of single-qubit Pauli operators $\sigma^x$, $\sigma^z$, $\sigma^y$ and identity $I$. Here, we use the $\alpha$-R\'enyi entropy
\begin{equation}
    H_\alpha[p] = \frac{1}{1-\alpha} \ln (\sum_{p}p^\alpha)
\end{equation}
over the distribution of Pauli expectation values~\cite{leone2022stabilizer}
\begin{equation}\label{eq:distp_pure}
    p_{\ket{\psi}}(P)=\frac{\bra{\psi}P\ket{\psi}^{2}}{2^N}\,, 
\end{equation}
which satisfies $p_{\ket{\psi}}\geq0$ and $\sum_{P\in\mathcal{P}_N}p_{\ket{\psi}}(P)=1$. In the limit $\alpha\rightarrow 1$, one obtains the von-Neumann SRE~\cite{haug2023stabilizer,lami2023quantum}
\begin{equation}
M_1(\ket{\psi})=-2^{-N}\sum_{P\in \mathcal{P}_N}\bra{\psi}P\ket{\psi}^{2}\ln(\bra{\psi}P\ket{\psi}^{2})\,.
\end{equation}
SREs have the following properties: i) For any pure stabilizer state $\ket{\psi_\text{C}}$, $M_\alpha(\ket{\psi_\text{C}})=0$, else $M_\alpha(\ket{\psi})>0$.
ii) Invariant under Clifford unitaries $U_\text{C}$, i.e. $M_\alpha(U_\text{C} \ket{\psi} )=M_\alpha(\ket{\psi})$,
iii) Additive, i.e. $M_\alpha(\ket{\psi}\otimes\ket{\phi})=M_\alpha(\ket{\psi})+M_\alpha(\ket{\phi})$.
$M_\alpha$ is a monotone under channels that map pure states to pure states for $\alpha\geq2$~\cite{leone2024stabilizer}, while this is not the case for $\alpha<2$. Note that SREs are not strong monotones for any $\alpha$~\cite{haug2023stabilizer}, while they are for their linearized versions for $\alpha\geq2$~\cite{leone2024stabilizer}.

In order to extend the SRE to mixed states, we need to generalize the probability distribution in \eqref{eq:distp_pure}. A modified distribution for mixed states can be constructed simply by adding a normalization factor, yielding
\begin{equation}\label{eq:distp}
    p_\rho(P)=2^{-N}\frac{\text{tr}(\rho P)^2}{\text{tr}(\rho^2)}\,.
\end{equation}
It is easy to see that $p_\rho\geq0$ and $\sum_{P\in\mathcal{P}_N}p_\rho(P)=1$. For pure states $p_{\ket{\psi}\bra{\psi}}=p_{\ket{\psi}}=2^{-N}\bra{\psi}P\ket{\psi}^2$.
We then define the $\alpha$-SREs for $N$-qubit mixed states $\rho$ from the viewpoint of entropies as
\begin{equation}\label{eq:alpha_SRE}
\tilde{M}_\alpha(\rho)=H_\alpha(p_\rho)+S_2(\rho)-N\ln(2)\,,
\end{equation}
where $S_2(\rho)=-\ln(\rho^2)$ is the 2-R\'enyi entropy. We note that the case $\alpha=2$ is equivalent to the mixed-state SRE defined in Ref.~\cite{leone2022stabilizer}. If $\rho$ describes a pure state, $\rho=\ket{\psi}\bra{\psi}$, then we have $\tilde{M}_\alpha(\rho)=M_\alpha(\ket{\psi})$.

For mixed states, $\tilde{M}_\alpha(\rho)$ is zero for mixed stabilizer states of the form $\rho_{\text{C}_0}=\sum_{P\in G}\alpha_P P$ (i.e. states that have a pure stabilizer state as their purification)  where $G$ is a commuting set of Pauli operators,  and non-zero otherwise.
However,  states of the form of $\rho_{\text{C}_0}$  are only a subset of all mixed stabilizer states, which are convex mixtures of pure stabilizer states $\vert\psi_{\text{C}}^{(i)}\rangle$, i.e. $\rho_\text{C}=\sum_i x_i \vert\psi_{\text{C}}^{(i)}\rangle\langle\psi_{\text{C}}^{(i)}\vert$. Thus, for mixed states $\tilde{M}_\alpha(\rho)$ is not a proper magic monotone. Nevertheless, it has a nice property of being efficiently computable, allowing us to probe magic in subsystems of many-body states.

While the distribution $p_\rho$ appears to be the most natural extension, we will also consider an alternative probability distribution over Pauli strings $P$ as
\begin{equation}\label{eq:distq}
q_\rho(P)=2^{-N}\text{tr}(\rho P \rho P)\,.
\end{equation}
Note that this distribution is closely related to Bell sampling, which samples from $P\sim2^{-N}\text{tr}(\rho^\ast P \rho P)$~\cite{haug2022scalable,montanaro2017learning}. 
It is easy to see that $q_\rho(P)\geq0$ due to $\rho \succeq0$ and $P\rho P \succeq0$.  
Further, we have $\sum_{P\in\mathcal{P}_N}q_\rho(P)=1$ due to $\sum_{P\in\mathcal{P}_N} P\rho P=2^N\text{tr}(\rho)I_n$ where $I_n$ is the identity. For pure states $q_{\ket{\psi}\bra{\psi}}=p_{\ket{\psi}}=2^{-N}\bra{\psi}P\ket{\psi}^2$. A nice property of the distribution $q_\rho$ is that, for any subsystem $A$, the distribution for the reduced density matrix $\rho_A=\text{tr}_{A^{\text{c}}}(\rho)$ is simply the marginal distribution of $q_\rho$ in $A$. In other words, %
\begin{equation} \label{eq:marginal_q}
    q_{\rho_A}(P_A) = \sum_{P\in \mathcal{P}_{A^\text{c}}} q_\rho(P_A\otimes P) 
\end{equation}
for any $P_A \in \mathcal{P}_A$, which is proven in \SM{}~\ref{sec:marginal}.
Here, given a subsystem $A$ with $N_A$ sites, $\mathcal{P}_A$ is the set of $4^{N_A}$ unsigned Pauli strings which consists of identities for any sites in the complement subsystem $A^\text{c}$.
The property in~\eqref{eq:marginal_q} is particularly useful for computing the mutual SRE, which will be defined in Sec.~\ref{sec:mutual_sre}. Note that the distribution $p_\rho$ does not possess an analogous property.

One can now also define mixed-state SREs using $q_\rho$. For $\alpha=2$, this conincides with $\tilde{M}_2$, i.e.
\begin{equation}
    \tilde{M}_2^{[q]}(\rho)=H_2[q_\rho]-S_2(\rho)-N\ln(2)\equiv \tilde{M}_2(\rho)
\end{equation}
However, for $\alpha=1$ one gains an alternate von-Neumann SRE
\begin{equation}\label{eq:SEneumann_alt}
\tilde{M}_1^{[q]}(\rho)=H_1[q_\rho]-S_2(\rho)-N\ln(2)\,.
\end{equation}
where $\tilde{M}_1^{[q]}\neq \tilde{M}_1$ and $H_1[q]=-\sum_q q\ln(q)$.
One can show that $\tilde{M}_1^{[q]}$ has the same properties as $\tilde{M}_1$, which one can show using $\tilde{M}_1^{[q]}\geq \tilde{M}_2\geq0$ from the hierarchy of R\'enyi entropies, and $\tilde{M}_1^{[q]}(\rho_\text{C})=0$.} 

\section{Mutual magic and  magic capacity} \label{sec:magic_quantities}
In this section, we introduce new magic quantities, including mutual $\alpha$-SRE $\mathcal{I}_\alpha$, an alternate definition of mutual von-Neumann SRE $\mathcal{I}_1^{[q]}$ and magic capacity $C_\text{M}$. We discuss how these quantities probe different aspects of magic. We will then show in the following sections that these quantities are efficiently computable with matrix product states.

\subsection{Mutual SRE} \label{sec:mutual_sre}
We define mutual $\alpha$-SREs as
\begin{equation} \label{eq:mutual_magic}
\mathcal{I}_\alpha(\rho)= \tilde{M}_\alpha(\rho) - \tilde{M}_\alpha(\rho_{B}) -\tilde{M}_\alpha(\rho_{A}) 
\end{equation}
where $A$ and $B$ are two subsystems of the state $\rho$. Note that~\eqref{eq:mutual_magic} generalizes the definition for $\alpha=2$ introduced in Ref.~\cite{tarabunga2023many}.
Note that in contrast to the usual mutual informations, $\mathcal{I}_\alpha$ can become negative, however $\mathcal{I}_2$ has still has found interesting applications in the context of many-body phenomena~\cite{tarabunga2023many,frau2024stabilizer,tarabunga2024magictransition}.

Using the alternate von-Neumann SRE defined in~\eqref{eq:SEneumann_alt}, we define an alternate version of the mutual von-Neumann SRE as
\begin{equation}
\mathcal{I}_1^{[q]}(\rho)=\tilde{M}_1^{[q]}(\rho)-\tilde{M}_1^{[q]}(\rho_A)-\tilde{M}_1^{[q]}(\rho_B)\,,
\end{equation}
which is equivalent to 
\begin{equation} \label{eq:alternate_mSRE_mutual_info}
    \mathcal{I}_1^{[q]}(\ket{\psi}) =  I_2(\rho)-  I(q_{\rho_A},q_{\rho_B})   ,
\end{equation}
where 
\begin{equation}
     I_2(\rho) = S_2(\rho_A) + S_2(\rho_B) - S_2(\rho)
\end{equation}
is the R\'enyi-2 quantum mutual information and
\begin{equation} \label{eq:classical_mutual_info}
    I(q_{\rho_A},q_{\rho_B}) = H_1(q_{\rho_A}) + H_1(q_{\rho_B}) - H_1(q_{\rho})  
\end{equation}
is the mutual information of the joint probability distribution $q_{\rho}$ defined in~\eqref{eq:distq}. Note that this relation holds because $q_{\rho_A}$ ($q_{\rho_B}$) is the marginal probability distribution of $q_{\rho}$ in $A$ ($B$). This provides $\mathcal{I}_1^{[q]}(\ket{\psi})$ with a clear interpretation, as the difference between the quantum mutual information and the classical mutual information of $q_{\rho}$ defined in subsystems $A$ and $B$. Importantly, this also leads to a practical advantage in numerical simulations, as we will discuss in Sec.~\ref{sec:perfect_sampling}.

We find that $\mathcal{I}_2$, $\mathcal{I}_1$ and $\mathcal{I}_1^{[q]}$ behave similarly, though they can show different signs for the same state. We study them in Appendix~\ref{sec:diffMutual} for the TFIM and Heisenberg model.

\subsection{Magic capacity and anti-flatness of the Pauli spectrum} \label{sec:magic_capacity}

We now introduce the magic capacity in direct analogy to entanglement capacity~\cite{deBoer2019aspects} as
\begin{equation}\label{eq:magiccapacity}
    C_M(\ket{\psi}) = \text{var}(\hat{M}_1)= \underset{P\sim p(P)}{\mathbb{E}}[\ln(\bra{\psi}P\ket{\psi}^2)^2]-M_1^2\,. 
\end{equation}
$C_M$ can be thought of as the variance of the estimator $\hat{M}_1$ to compute $M_1$ via Pauli sampling, and can be computed in $O(N\chi^3\epsilon^{-2})$ time for MPS. Note that, despite of the presence of constant shift in the definition of SRE, the expression for magic capacity is precisely equivalent to that used for entanglement capacity. Interestingly, entanglement capacity is closely related to the anti-flatness of the entanglement spectrum, which has, in turn, been connected to magic~\cite{tirrito2023quantifying,cao2024gravitational,viscardi2025interplay}. If the explicit form of the SRE is known, the magic capacity can be calculated as~\cite{tarabunga2023many}
\begin{equation} \label{eq:Cm_from_sre}
    C_M = \frac{\text{d}^2 \left[ (1-\alpha) M_\alpha \right]}{\text{d}\alpha^2} \bigg\rvert_{\alpha=1}\,.
\end{equation}
Similarly to entanglement capacity, the magic capacity is related to the anti-flatness of the Pauli spectrum. Specifically, magic capacity vanishes when the Pauli spectrum is flat, i.e., if the spectrum $p(P)=\text{const}$ for any Pauli string $P$ with non-zero expectation value. 

As a toy example, consider a state whose Pauli spectrum is uniform with $ \bra{\psi}P\ket{\psi}^2 = 1/(2^N + 1)$ for all $P \in \mathcal{P}/ \{ I\}$. Although this spectrum does not correspond to a physical density matrix, it has been numerically observed to provide a sufficiently good approximation for the Pauli spectrum of a typical state~\cite{haug2024probing}. It is clear that this spectrum is nearly flat, with only the identity operator deviating from a perfectly flat distribution. We can directly compute the SRE and the magic capacity for this uniform state, yielding 
\begin{equation}
    M_1^\mathrm{uniform} =  (1-2^{-N}) \ln (2^N + 1)
\end{equation}
and
\begin{equation}
    C_M^\mathrm{uniform} =  2^{-N}(1-2^{-N}) \ln^2 (2^N + 1).
\end{equation}
In the limit $N \to \infty$, we have $M_1^\mathrm{uniform} = N\ln(2) $ and $C_M^\mathrm{uniform} \to 0$. The vanishing magic capacity captures the intuition that the Pauli spectrum of the uniform state is asymptotically flat, as expected. We compute the magic capacity of typical states in Appendix~\ref{sec:typical}, finding similar scaling with $C_M=\text{const}$, while product states have $C_M\propto N$.

Note that, since the Pauli spectrum always includes the identity operator, $p(I) = 2^{-N}$, it follows that only stabilizer states exhibit flat Pauli spectrum. In other words, magic capacity vanishes if and only if the state is a pure stabilizer state. In fact, magic capacity satisfies similar properties as the SREs:
i) For any pure stabilizer state $\ket{\psi_\text{C}}$, $C_M(\ket{\psi_\text{C}})=0$, else $C_M(\ket{\psi})>0$.
ii) Invariant under Clifford unitaries $U_\text{C}$, i.e. $C_M(U_\text{C} \rho U_\text{C}^\dagger)=C_M(\rho)$,
iii) Additive, i.e. $C_M(\rho\otimes\sigma)=C_M(\rho)+C_M(\sigma)$. Thus, we may also view magic capacity as a measure of magic. However, magic capacity captures a different type of complexity, namely the complexity of estimating the magic measure itself. This is different from the conventional notion of a measure of magic, which typically characterizes the hardness of classical simulation by Clifford-based techniques or the number of non-Clifford gates required to prepare the state.

To illustrate this point further, let us consider the estimation of $M_\alpha$ for $\alpha \neq 1$. In this case, the SRE can be estimated by
\begin{equation}
M_\alpha(\ket{\psi}) = \frac{1}{1-\alpha} \ln \underset{P\sim p(P)}{\mathbb{E}}[\bra{\psi}P\ket{\psi}^{2(\alpha-1)}],
\end{equation}
and the variance of this estimator is given by~\cite{tarabunga2023many}
\begin{equation}
    \mathrm{Var} \left(M_\alpha \right) \approx  \frac{\exp \left[ 2(\alpha-1)(M_\alpha-M_{2\alpha-1}) \right] - 1}{|\alpha-1|}. 
\end{equation}
We see that the variance is dictated by $M_\alpha-M_{2\alpha-1}$. This difference between SREs of different R\'enyi indices also serves as a measure of the anti-flatness of the Pauli spectrum, analogous to the anti-flatness measures for the entanglement spectrum~\cite{cao2024gravitational}. This highlights the direct connection between anti-flatness of the Pauli spectrum and the complexity of estimating magic from samples of Pauli operators.

\section{Efficient numerical methods by Pauli sampling}
\label{sec:num_methods}

In this section, we present several algorithms to compute the SREs, mutual von-Neumann SRE, and magic capacity, all of which are based on sampling of Pauli strings.

\subsection{Perfect sampling of mutual von-Neumann SRE for MPS} \label{sec:perfect_sampling}

The von-Neumann SRE $M_1(\ket{\psi})$ for pure MPS $\ket{\psi}$ can be efficiently computed via perfect sampling as shown in Ref.~\cite{haug2023stabilizer,lami2023quantum}, where the complexity scales as $O(NC_M\chi^3\epsilon^{-2})$, where $\epsilon$ is the additive precision.
In particular, the error $\epsilon$ to estimate $M_1$ scales as  \begin{equation}
    \epsilon\sim\sqrt{C_M/K}\,
\end{equation} 
where magic capacity $C_M=\text{var}(\hat{M}_1)$ relates the variance of $m_1$ over the perfect sampling protocol and is upper bounded by $C_M=O(N^2)$~\cite{lami2023quantum} and $K$ is the number of samples.
Notably,  $C_M$ can have widely different scaling with $N$ depending on the model and parameter regime.
Similarly, $C_M$ as defined in~\eqref{eq:magiccapacity} can be estimated by via perfect Pauli sampling~\cite{haug2023stabilizer,lami2023quantum} similarly to $M_1$.

The mutual von-Neumann SRE $\mathcal{I}_1^{[p]}(\ket{\psi})$ can be evaluated efficiently using the methods of Ref.~\cite{lami2023quantum}. In particular, one has to sample from the normalized Pauli distribution $p_\rho$ as given by~\eqref{eq:distp}, which can be done efficiently as long as $\rho$ has a purification in form of an MPS. The SREs for each of the subsystems $A, B,$ and $AB$ in~\eqref{eq:mutual_magic} can then be evaluated separately in an efficient way. Note that, typically the leading extensive term in the SRE is cancelled out, leaving a subleading term that is much smaller than the SRE. As a result, the estimation of $\mathcal{I}_1^{[p]}(\ket{\psi})$ is expected to require a significantly larger number of samples compared to the typical estimation of $M_1$ (see Appendix~\ref{sec:diffMutual}).

While a similar approach can be used to compute the alternate version of the mutual von-Neumann SRE $\mathcal{I}_1^{[q]}(\ket{\psi})$, here we give an even more efficient way to compute $\mathcal{I}_1^{[q]}(\ket{\psi})$. Using the expression in~\eqref{eq:alternate_mSRE_mutual_info}, we can write 
\begin{equation} 
    \mathcal{I}_1^{[q]}(\ket{\psi}) =   S_2(\rho_A) + S_2(\rho_B) - I(q_{\rho_A},q_{\rho_B}) ,
\end{equation}
where $I(q_{\rho_A},q_{\rho_B})$ is defined in~\eqref{eq:classical_mutual_info}. The 2-R\'enyi entanglement entropies can be efficiently computed for MPS via standard methods. Then, $I(q_{\rho_A},q_{\rho_B})$ can be estimated by
\begin{align*}
    I(q_{\rho_A},q_{\rho_B}) &= \underset{P\sim q(P)}{\mathbb{E}}[\ln(\text{tr}(\rho P \rho P))-\ln(\text{tr}(\rho_A P_A \rho_A P_A))\\
    &-\ln(\text{tr}(\rho_B P_B \rho_B P_B))]
\end{align*}
where $P_A$ and $P_B$ are Pauli strings reduced onto subsystem $A$ and $B$ respectively, such that $P = P_A \otimes P_B$. Thus, we need to sample only from one distribution $q_{\rho}$, which can be done efficiently for MPS as shown in Ref.~\cite{haug2023stabilizer}. 
We discuss the technical details in Appendix~\ref{sec:mutualNeumann}. Importantly, the estimation of	$\mathcal{I}_1^{[q]}$ is expected to require significantly fewer samples than $\mathcal{I}_1^{[p]}$ because it avoids the need to accurately estimate the individual SREs for the subsystems. We corroborate this computational advantage in the transverse-field Ising model, as shown in Appendix~\ref{sec:diffMutual}.

\subsection{Monte Carlo algorithms for mutual 2-SRE} \label{sec:monte_carlo}
Next, we introduce improved methods to compute the mutual SRE $\mathcal{I}_2$ for $\alpha=2$.
Previous works have estimated $\mathcal{I}_2$ by Monte Carlo sampling of Pauli strings in tensor networks~\cite{tarabunga2023many,tarabunga2024critical,frau2024non,frau2024stabilizer}. To do this, one rewrites~\eqref{eq:mutual_magic} as follows:
\begin{equation} \label{eq:i2}
    \mathcal{I}_2(\rho) = I_2(\rho) - B(\rho),
\end{equation}
where 
\begin{equation} 
\resizebox{.99\hsize}{!}{$B(\rho) = -\ln \left( \frac{\sum_{P_A \in \mathcal{P}_A} |\Tr(\rho_A P_A)|^4 \sum_{P \in \mathcal{P}_B} |\Tr(\rho_B P_B)|^4}{\sum_{P \in \mathcal{P}}  |\Tr(\rho P)|^4} \right)$}
\end{equation}
and 2-R\'enyi mutual information $I_2(\rho)=S_2(\rho_{A}) + S_2(\rho_{B}) - S_2(\rho)$. In the case of complementary subsystems, the 2-R\'enyi mutual information is simply given by the 2-R\'enyi entanglement entropy, which is readily accessible in MPS. To estimate $B(\rho)$, one can sample the Pauli strings according to $\Pi(P) \propto \Tr(\rho P)^4$ and compute
\begin{equation} \label{eq:estimator_w}
    B(\rho) =  -\ln  \underset{P\sim \Pi(P)}{\mathbb{E}}\left[ \frac{ |\Tr  (\rho_A P_A)|^4 |\Tr  (\rho_B P_B)|^4 }{|\Tr  (\rho P)|^4}\right], 
\end{equation}
where $P$ is decomposed as $P=P_A \otimes P_B$. Since expectation values of Pauli strings can be efficiently computed in MPS, one can perform Metropolis Monte Carlo scheme by locally updating Pauli string configurations. 

The limitation of such an approach is that one must deal with equilibration and autocorrelation times, leading to reduced efficiency. Moreover, the local update scheme need to be tailored to the specific model being studied. In particular, the presence of symmetry may necessitate the design of intricate multi-site update schemes to ensure ergodicity of the Markov chain. While a perfect sampling scheme is able to solve these issues, it cannot be directly applied to estimate $B(\rho)$ since it can only sample from the Pauli distribution $p(P)$.  

To mitigate these problems, we propose a method that combines the Monte Carlo and perfect sampling techniques. Specifically, we employ the Metropolis-Hastings Monte Carlo algorithm~\cite{hastings1970montecarlo} to sample from $\Pi(P)$. In this scheme, given a current configuration $P$, a candidate configuration $P'$ is proposed according to some prior distribution $g(P' | P)$, which we assume we can draw from. The candidate is then accepted with probability
\begin{equation}
    P_{\mathrm{acc}}(P \to P') = \mathrm{min} \{ 1, \frac{g(P | P')}{g(P' | P)} \frac{\Pi(P')}{\Pi(P)} \}.
\end{equation}
The ideal prior is $g(P' | P) = \Pi(P')$, which is however not available through direct sampling. In Metropolis Monte Carlo, $P'$ is generated by locally updating $P$ and the corresponding $g(P' | P)$ is symmetric in its arguments. Here, we propose different types of prior that approximate the ideal prior $\Pi(P')$, and which can be directly sampled in MPS. These approaches also have the advantage that all samples are independent from each other, and thus can be generated in parallel.

\begin{figure}[htbp]
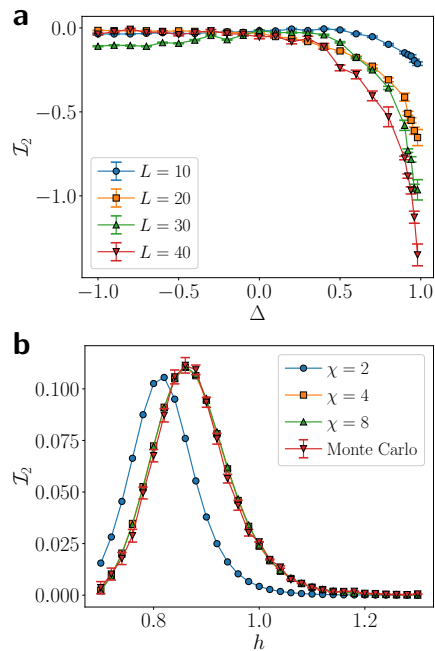

	\centering	
	\subfigimg[width=0.32\textwidth]{a}{xxz_i2.pdf}
 	\subfigimg[width=0.32\textwidth]{b}{ising.pdf}
    \caption{ Mutual 2-SRE $\mathcal{I}_2$ of the groundstate of the (a) the Heisenberg model with bond dimension $\chi=20$, and (b) TFIM with $N=16$. For the TFIM, the Monte Carlo results are obtained  with $\chi=8$ and $\chi'=2$. The mutual SRE is calculated in respect to two subsystems $A,B$ of size $N/4$ located at the respective boundaries of the chain. 
	}
\label{fig:monte_carlo}
\end{figure}

First, we propose to use $p(P')$ as a prior, which is readily available through perfect sampling. Given that $p(P')$ and $\Pi(P')$ correspond to different powers of Pauli strings, we expect that these probability distributions are strongly correlated. This suggests that employing $p(P')$ as the proposal distribution within the Metropolis-Hastings framework could improve sampling efficiency. This intuition is particularly clear for stabilizer states, in which case $p(P)$ and $\Pi(P)$ are identical. More generally, the difference between two probability distributions can be quantified using the Kullback-Leibler divergence, defined as
\begin{equation}
    D_{\mathrm{KL}}(P(x) \lVert Q(x)) = \sum_x P(x) \ln \frac{P(x)}{Q(x)},
\end{equation}
where $P(x)$ and $Q(x)$ are probability distributions. We find that the Kullback-Leibler divergence between $p(P)$ and $\Pi(P)$ is related to the difference of SRE, as follows
\begin{equation}
    D_{\mathrm{KL}}(p(P) \lVert \Pi(P)) = M_1 - M_2.
\end{equation}
As discussed in Sec.~\ref{sec:magic_capacity}, such a difference of SRE characterizes the anti-flatness of Pauli spectrum. This approach thus works best for states with a nearly flat Pauli spectrum. In the case where the spectrum is far from being flat, exponentially many samples may be required for convergence to the desired distribution. We note that a similar exponential sample problem also arises in estimating the 2-SRE through sampling techniques~\cite{lami2023quantum,haug2023stabilizer}. Nevertheless, the overall cost typically grows much more slowly than that of exact computation~\cite{tarabunga2023many,tarabunga2024critical}. Therefore, even with the potential for high sample complexity in some cases, this approach remains valuable for extending the accessible system sizes.

We briefly illustrate this method for the ground-state of the XXZ or Heisenberg model which is given by
\begin{equation} \label{eq:xxz}
H_\text{XXZ}=-\sum_{n=1}^{N-1} (\sigma^x_n\sigma^x_{n+1}+\sigma^y_n\sigma^y_{n+1}+\Delta\sigma^z_n\sigma^z_{n+1})
\end{equation}
with the anisotropy $\Delta$. The model conserves the total number of excitations. We compute the ground state within the half-filling excitation symmetry sector $N_\text{p}=\sum_{n=1}^N\sigma^z_n=0$.
In Fig.~\ref{fig:monte_carlo}a, we show the mutual 2-SRE  $\mathcal{I}_2$ as a function of the field $\Delta$. It has a similar qualitative behavior as the mutual von-Neumann SRE. \revA{Further, we observed a significant increase in the statistical errors as the anisotropy approaches $\Delta=1$. To maintain data quality, we thus required a substantially larger number of samples $K=10^8$ for $\Delta\geq0.9$ compared to $K=10^6$ for $\Delta>0.9$. This can be seen as a consequence that the anti-flatness of the ground states drastically increases as the anisotropy approaches $\Delta=1$, as we will demonstrate in Sec.~\ref{sec:num_results} using the magic capacity $C_M$.}

Another strategy to approximate the probability distribution $\Pi(P)$ is to construct an MPS approximation $\ket{\mathrm{MPS}'}$ of the target MPS by truncating the bond dimension $\chi$ to a smaller bond dimension $\chi'$. The distribution $\Tilde{\Pi}(P) \propto \bra{\mathrm{MPS}'} P \ket{\mathrm{MPS}'}^4 \ $ can be used as an approximation to $\Pi(P)$. For sufficiently small $\chi'$, $\Tilde{\Pi}(P)$ can be sampled directly by exact contraction of the MPS with a computational cost scaling as $O(\chi'^9)$. This allows us to implement the Metropolis-Hastings scheme by selecting the prior $g(P' | P) = \Tilde{\Pi}(P)$. Note that this approach also enables the estimation of mutual magic when the subsystems are separated, a scenario that has found importance in physical phenomena~\cite{tarabunga2023many,MontaLopez2024exact,frau2024stabilizer}. In Fig.~\ref{fig:monte_carlo}b, we compare the mutual SRE obtained by exact contraction with the Monte Carlo estimation for $N=16$ for the ground state of the TFIM model
\begin{equation}\label{eq:ising}
    H_\text{TFIM}=-\sum_{k=1}^{N-1}\sigma^x_k\sigma^x_{k+1}-h\sum_{k=1}^N \sigma_k^z
\end{equation}
where we have the field $h$ and choose open boundary conditions. 
Here, $A, B$ are two subsystems located at the boundary of the chain, each of length $N/4$. For the Monte Carlo estimation, we use $\chi=8$ and $\chi'=2$. While the MPS $\chi'=2$  approximation exhibits deviations from the exact mutual SRE, the samples generated from this approximation can still serve as effective proposals within the Metropolis-Hastings framework. This approach leads to accurate Monte Carlo estimation of the mutual SRE, demonstrating the efficacy of our method.

\begin{figure*}[htbp]
	\centering	
	\subfigimg[width=0.3\textwidth]{a}{meanshannonEvalSCYN8D300C1t0n10000r0R-999S2p1M2_seedD2000p-998_0P1001_0J1.pdf}
 	\subfigimg[width=0.3\textwidth]{b}{derivshannonEvalSCYN8D300C1t0n10000r0R-999S2p1M2_seedD2000p-998_0P1001_0J1.pdf}%
    \subfigimg[width=0.3\textwidth]{c}{derivshannonScaleEvalSCYN8D300C1t0n10000r0R-999S2p1M2_seedD2000p-998_0P1001_0J1.pdf}
    \caption{Von-Neumann SRE density $m_1$ for Clifford circuits doped with T-gate density $z=N_\text{T}/N$. \idg{a} Average $m_1$ against $z$ for different qubit numbers $N$. \idg{b} $\partial_z m_1$ against $z$. Curves of $\partial_z m_1$ intersect  for all $N$ at saturation transition $z_{\text{c},1}^\text{fit}\approx 2.05$, which is marked by the dashed black line.  
    \idg{c} $\partial_z m_1$ against rescaled $(z-z_{\text{c},1}^\text{fit})N^\gamma$, where we find scaling factor $\gamma\approx0.7$. After rescaling, the curves collapse for all $N$, showing the universality of $\partial_z m_1$.
	}
	\label{fig:shannon_haar}
\end{figure*}

\begin{figure*}[htbp]
	\centering	
    \subfigimg[width=0.24\textwidth]{a}{varNsqEvalSCYN8D300C1t0n10000r0R-999S2p1M2_seedD2000p-998_0P1001_0J1.pdf}
    	\subfigimg[width=0.24\textwidth]{b}{varNEvalSCYN8D300C1t0n10000r0R-999S2p1M2_seedD2000p-998_0P1001_0J1.pdf}
    \subfigimg[width=0.24\textwidth]{c}{derivcapacityEvalSCYN8D300C1t0n10000r0R-999S2p1M2_seedD2000p-998_0P1001_0J1.pdf}
    \subfigimg[width=0.24\textwidth]{d}{derivcapacityScaleEvalSCYN8D300C1t0n10000r0R-999S2p1M2_seedD2000p-998_0P1001_0J1.pdf}
    \caption{Magic capacity $C_\text{M}$ for Clifford circuits doped with T-gate density $z=N_\text{T}/N$. 
    \idg{a} 
    Magic capacity $C_\text{M}$ against $z$. Curves for all $N$ intersect at a transition point $z_{\text{c},C_\text{M}}\approx 2.35$, which is marked by vertical dashed line.  We used $K=10^5$ Pauli samples for $N=8$ and $N=12$, while $K=10^3$ for $N=16$ and $N=20$.  Data is averaged over at least 100 random instances.
        \idg{b} Magic capacity density $C_\text{M}/N$, which is rescaled with $1/N$. 
    \idg{c} Derivative of magic capacity density $\partial_z C_\text{M}/N$ against $z$. Curves for $N\geq12$ intersect at transition point $z_{\text{c},C_\text{M}}\approx 2.35$. 
    \idg{b} $\partial_z C_\text{M}/N$ against rescaled $(z-z_{\text{c},C_\text{M}})N$. Around the transition point $z_{\text{c},C_\text{M}}$, curves for $N\geq12$ can be made to collapse to a single curve. 
	}
	\label{fig:capacity_cliff}
\end{figure*}

\subsection{Statevector simulation for Pauli sampling, magic capacity and SRE} \label{sec:statevector_simulation}
Finally, we introduce a statevector simulation method for Pauli sampling, Bell sampling, magic capacity, and von-Neumann SRE with time complexity $O(2^{N/2})$ and memory $O(2^{N})$. 

Pauli sampling involves sampling from the Pauli distribution $P\sim p_{\ket{\psi}}(P)=2^{-N}\bra{\psi}P\ket{\psi}^2$. It has for example applications in direct estimation of fidelity~\cite{flammia2011direct}.
Pauli sampling is closely related to Bell sampling $P\sim2^{-N}\bra{\psi^\ast}P\ket{\psi}^2$, which can be efficiently done in experiment~\cite{montanaro2017learning,gross2021schur,lai2022learning,haug2023efficient,hangleiter2023bell} with many applications, such as learning stabilizer states~\cite{montanaro2017learning} with few T-gates~\cite{grewal2023efficient,hangleiter2023bell,leone2024learning} and estimating magic~\cite{haug2022scalable,haug2023efficient}. In fact, our Pauli sampling algorithm can also be used to perform Bell sampling.

From Pauli sampling, one can efficiently compute von-Neumann SREs and magic capacity. 
Simulating Pauli sampling naively on a classical computer requires storing a $4^N$ dimensional state in memory, which in general requires too much memory to be practical.
To reduce memory load, one could consider Feynmann type simulation algorithms  which have lower memory cost, however the trade-off is an excessive computational time~\cite{aaronson2016complexity}.

To balance memory and computational time, we provide a hybrid Schrödinger-Feynmann algorithm to sample Pauli $P$ from the distribution $p(P)$, where each sample takes $O(8^{N/2})$ time and $O(2^N)$ memory. The algorithm is presented in detail in Appendix~\ref{sec:alg}
The main idea of our algorithm is to sample the $2N$ bits of the Pauli in two steps: The first $N$ qubits are sampled with a Feynmann-like algorithm where the state is not explicitly stored in memory. Then, after measuring $N$ qubits the reduced state can be stored fully in memory and the outcomes for the final $N$ qubits are sampled via marginals.

From this, we get an improved method to compute SREs. 
A naive method to compute SREs is to sum over exponentially many Pauli expectation values $\bra{\psi}\sigma\ket{\psi}^{2\alpha}$. This method scales as $O(8^N)$ and has been implemented for up to $15$ qubits.
With Pauli sampling, we achieve a square-root speedup to compute the case $\alpha=1$, allowing us to compute von-Neumann SREs for $24$ qubits. 
Similarly, we can compute the magic capacity via the Pauli sampling.

The detailed algorithm is presented in Appendix~\ref{sec:algSE} where the code is available on Github from Ref.~\cite{haug2023stabilizerneumann}. Our algorithm gives us an unbiased estimator of $M_1(\ket{\psi})$, where we can bound the scaling of the estimation error as $\epsilon\sim \sqrt{C_M/K}$, where $K$ is the number of measured samples. We achieve a scaling of a $O(8^{N/2} \epsilon^{-2})$ in time and $O(2^N)$ in memory.
Our algorithm has the same asymptotic scaling as the perfect sampling algorithm for MPS with maximal bond dimension $\chi=2^{N/2}$, but does not need the explicit construction of the MPS, which is convenient for many applications.

\section{Applications to many-body phenomena}
\label{sec:num_results}

We apply the methods in Sec.~\ref{sec:num_methods} to investigate the mutual von Neumann SRE and magic capacity, both in the context of ground states and quantum circuits. For the Clifford+T circuits, we perform the statevector simulation in Sec.~\ref{sec:statevector_simulation}. For ground states, we obtain them using DMRG, and then employ the MPS Pauli sampling method. 

\subsection{Clifford+T circuits} \label{sec:saturation_transition}

First, we study $m_1$ and magic capacity $C_M$ to characterize random Clifford circuits doped with T-gates. They are random Clifford circuits $U_\text{C}$ interleaved with $N_\text{T}$ T-gates $T=\text{diag}(1,\exp(-i\pi/4)$~\cite{haferkamp2022efficient,haug2023efficient}
\begin{equation}\label{eq:random_CliffT}
\ket{\psi(N_\text{T})}=U_\text{C}^{(0)}[\prod_{k=1}^{N_\text{T}} (T\otimes I_{n-1}) U_\text{C}^{(k)} ]\ket{0}\,.
\end{equation}
Here, we define the T-gate density $z=N_\text{T}/N$.
It is known that Clifford+T circuits with $z=\text{const}$ become close approximations of Haar random states~\cite{leone2021quantum,haferkamp2022efficient,leone2024decoders,magni2025anticoncentration}, where the exact transition point is not known.
Further, it has been observed that Clifford+T circuits saturate the maximal possible value of the SRE $M_\alpha$ in a sharp transition for large $N$. Such saturation transitions appear at a critical $z_{\text{c},\alpha}$, which depends on the chosen $\alpha$ of the $\alpha$-SRE~\cite{haug2024probing}. 
\revA{The saturation transition is characterized by a universal behavior of  $\partial_z m_\alpha$, where $m_\alpha=M_\alpha/N$ is the magic density and $\partial_z$ is the derivative in respect to $z$. In particular, one finds that the curves of $\partial_z m_\alpha$ for different $N$ cross at the same $z_{\text{c},\alpha}$. Further, by a proper rescaling, $\partial_z m_\alpha(N)$ can be made to collapse to a single curve for all $N$~\cite{haug2024probing}. This is hallmark of universal behavior, usually found at phase transitions, where properties of the system become scale-invariant and simple functions of $N$~\cite{osterloh2002scaling}. 
The transition points have been found analytically, e.g. $z_{\text{c},0}^\text{analytical}=1$, $z_{\text{c},1}^\text{analytical}=2$ and $z_{\text{c},2}^\text{analytical}=\ln(2)/\ln(4/3)\approx2.409$~\cite{haug2024probing}. 
However, the functional shape of the curves $M_\alpha(N)$ around the transition is known analytically only for $\alpha=2$, while for other $\alpha$ numerical studies for sufficiently large $N$ have been challenging.  }
As Clifford+T circuits are highly entangled quantum states, MPS techniques are not suitable for their simulations. Instead, we use our method for statevector simulation introduced in Sec.~\ref{sec:statevector_simulation} to compute $M_1$ and $C_\text{M}$ up to $N=24$ qubits.

In Fig.\ref{fig:shannon_haar}a, we show the von-Neumann SRE density $m_1=M_1/N$ against T-gate density $z=N_\text{T}/N$. We find that it increases linearly with $z$ until it saturates at sufficiently large $z$. For large $N$, this transition to maximal $m_1$ is expected to become sharp, at a characteristic saturation transition point $z_{\text{c},1}$~\cite{haug2024probing}.  %
\revA{In Fig.\ref{fig:shannon_haar}b, we plot the derivative $\partial_z m_1$ against $z$. We find that the curves for all $N$ intersect at $z_{\text{c},1}^\text{fit}\approx2.05$ marked with a dashed black line, which indicates the critical point of the saturation transition. Note that the numeric value matches closely the analytic value of $z_{\text{c},1}^\text{analytic}=2$ derived in Ref.~\cite{haug2024probing} for $\alpha=1$.
Notably, around the critical point, the curves become universal, i.e. they follow a functional form that has a simple dependence on $N$. In particular, by rescaling $(z-z_{\text{c},1}^\text{fit})N^\gamma$ with $\gamma\approx0.7$, we find that all curves clearly overlap in Fig.\ref{fig:shannon_haar}c.

Next, we study magic capacity $C_\text{M}$ in Fig.\ref{fig:capacity_cliff}, where as we will see observe the transition at a different $z_{\text{c},C_\text{M}}$. In Fig.\ref{fig:capacity_cliff}a, we show the magic capacity $C_\text{M}$ against $z$. We find that it increases with $z$, until it peaks and decreases sharply onto a constant, nearly $N$-independent value. Notably, for large $N$ and $z$, the value found matches the typical $C_\text{M}^\text{typ}\approx 0.9348$ for Haar random states derived in~\eqref{eq:typicalcapacity}, indicating that at $z_{\text{c},C_\text{M}}$ Clifford+T circuits become typical and are good approximations of Haar random states, quantifying predictions of Refs.~\cite{haferkamp2022efficient,leone2021quantum}. 
Notably, we see that at $z_{\text{c},C_\text{M}}\approx 2.35$, curves for different $N$ intersect at a single point which is indicated by the vertical black line. This indicates a transition in $C_\text{M}$ at this point, which is notably at a different point than $z_{\text{c},1}$.
In Fig.\ref{fig:capacity_cliff}b, we study the magic capacity density $C_\text{M}/N$. 
For $z\ll z_{\text{c},C_\text{M}}$, we find that $C_\text{M}\propto N$, indicating that the capacity increases with $N$, in contrast to $z>z_{\text{c},C_\text{M}}$.
Next, in Fig.\ref{fig:capacity_cliff}c we study the derivative of the magic capacity density $\partial_z C_\text{M}/N$. We find that curves for $N\geq12$ intersect at $z_{\text{c},C_\text{M}}$. We note that small $N=8$ does not intersect at the saturation transition point due to finite size effects.
Further, in Fig.\ref{fig:capacity_cliff}d, we plot the derivative against rescaled $(z-z_{\text{c},C_\text{M}})N$. Here, we find that around $z_{\text{c},C_\text{M}}$, the curves coincide for different $N$, for $N\geq12$.  
This highlights that $C_\text{M}$ also undergoes a saturation transition similar to $m_\alpha$, however the transition happens at larger $z$.

Finally, we note that the saturation transition in magic capacity $z_{\text{c},C_\text{M}}$ implies a transition in the number of samples $K$ needed to estimate $M_1$. In particular, the error scales as $\epsilon\sim \sqrt{C_\text{M}/K}$ as $C_\text{M}$ determines the variance of the estimator of $M_1$ (see Appendix~\ref{sec:algSE}). In particular, the variance determines the number of sampled Pauli strings needed to estimate $M_1$ within a given accuracy.
Thus, we find that the von-Neumann SRE requires the most samples for $z\approx 1.7$ where $C_\text{M}$ shows a peak and scales as $C_\text{M}\propto N$. In contrast,  for $z>z_{\text{c},C_\text{M}}$,  $C_\text{M}$ is small and independent of $N$, rendering the estimation of the SRE substantially easier when states become typical.  }

\begin{figure*}[htbp]
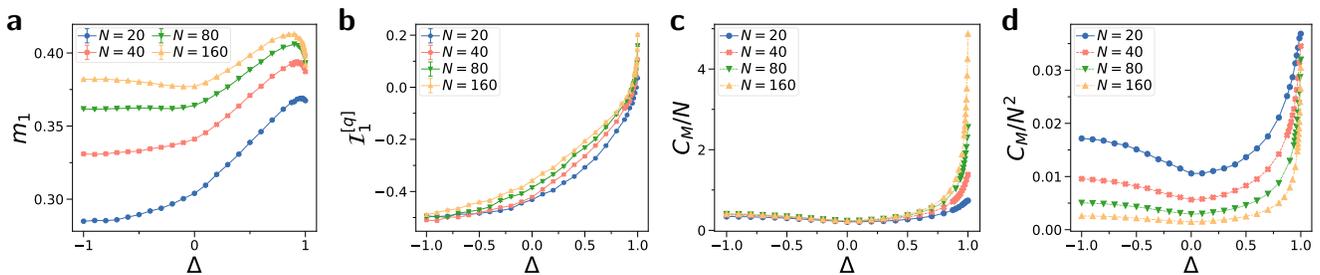

	\centering	
	\subfigimg[width=0.24\textwidth]{a}{meanshannonResPN20D40m4p10J1h-0_9g0S100P0C1R-1c11M100000_hcustomD27p-1_0P1_0.pdf}
 	\subfigimg[width=0.24\textwidth]{b}{meanshannonmutualResPN20D40m4p10J1h-0_9g0S100P0C1R-1c11M100000_hcustomD27p-1_0P1_0.pdf}
    	\subfigimg[width=0.24\textwidth]{c}{varshannonbareNResPN20D40m4p10J1h-0_9g0S100P0C1R-1c11M100000_hcustomD27p-1_0P1_0.pdf}
	\subfigimg[width=0.24\textwidth]{d}{varshannonbareResPN20D40m4p10J1h-0_9g0S100P0C1R-1c11M100000_hcustomD27p-1_0P1_0.pdf}
    \caption{Von-Neumann SRE for groundstate of Heisenberg model with anisotropy $\Delta$. \idg{a} $m_1=M_1/N$, \idg{b} mutual von-Neumann SRE $\mathcal{I}_1^{[q]}$ for bipartitions of size $N/2$. \idg{c} Magic capacity density $C_M/N$ against $\Delta$. \idg{d} Magic capacity $C_M$ rescaled with $1/N^2$.
    We use $K=10^5$ samples to estimate $m_1$ and $C_M$.
	}
	\label{fig:magicHeisenberg}
\end{figure*}

\subsection{Hamiltonian ground state transitions} \label{sec:ground_state}
Next, we use the magic capacity and mutual magic to characterize transitions in the ground state of Hamiltonians.

First, we study the Heisenberg model as defined in~\eqref{eq:xxz} as function of anisotropy $\Delta$.
We show $m_1$ in Fig.~\ref{fig:magicHeisenberg}a and the mutual SRE in Fig.~\ref{fig:magicHeisenberg}b. Notably, we find that towards $\Delta=1$, $m_1$ develops a minima, while $\mathcal{I}_1^{[q}]$ acquires a maximum.
Next, we look at the magic capacity $C_M$ (and thus variance of the estimator of $M_1$) in Fig.~\ref{fig:magicHeisenberg}c,d. We find that crucial information resides in the scaling of $C_M$ with $N$ which we study by plotting $C_M$ rescaled with powers of $N$.
In Fig.~\ref{fig:magicHeisenberg}c, we find that for $\Delta\ll1$ we have $C_M\propto N$.
In contrast, we find in Fig.~\ref{fig:magicHeisenberg}d $\Delta\approx1$, the magic capacity scales as $C_M\propto N^2$. Notably, this asymptotically saturates the scaling, as is upper bounded by $C_M=O(N^2)$. Note that this demonstrates the $\Delta=1$ ground state is a highly atypical state, which shows a quite distinct scaling compared to Haar random states (with $C_M=O(1)$) and product states (with $C_M\propto N$) as shown in Appendix~\ref{sec:typical}.

Next, in Fig.~\ref{fig:magicIsing} we study the ground state of the TFIM and show the von-Neumann SRE density $m_1=M_1/N$ in Fig.~\ref{fig:magicIsing}a for different $N$. $m_1$ peaks close to the critical point $h\approx1$ and converges to $h=1$ for increasing $N$. 
Similarly, in Fig.~\ref{fig:magicIsing}b, we study the magic capacity $C_M$. We find that $C_M\propto N$ for all $h$, and exhibits a minimum at the critical point $h=1$, indicating that the magic capacity is also sensitive to criticality.
Then, we show the mutual von-Neumann SRE $\mathcal{I}_1^{[q]}$ in Fig.~\ref{fig:magicIsing}c and $2$-SRE $\mathcal{I}_2$ in Fig.~\ref{fig:magicIsing}d. We find that both shows a sharp peak at the critical point $h=1$. Notably, both $\mathcal{I}_2$ and $\mathcal{I}_1^{[q]}$ do not increase substantially with $N$ at the critical point, highlighting that there is no logarithmic divergence and only constant corrections with $N$. 
\revA{We can use the minima of $C_M$ and $\mathcal{I}_2$, as well as the peaks of $m_1$ and $\mathcal{I}_1^{[q]}$ to determine the critical field of the TFIM. We adopt the fitting procedure of Ref.~\cite{haug2022quantifying}, which used $m_2$. Here,  we track the field $h_0(N)$ with extremal value of $m_1$, $C_M$, $\mathcal{I}_2$ and $\mathcal{I}_1^{[q]}$ for different $N$, and then extrapolate the fit to $N\rightarrow\infty$ to determine $h_\text{c}^\text{fit}\equiv h_0(N\rightarrow\infty)$. We demonstrate this procedure in Appendix~\ref{sec:crit}, where we find accurate match between true critical field $h_\text{c}=1$ and our fits $h_\text{c}^\text{fit}$.}

\begin{figure*}[htbp]
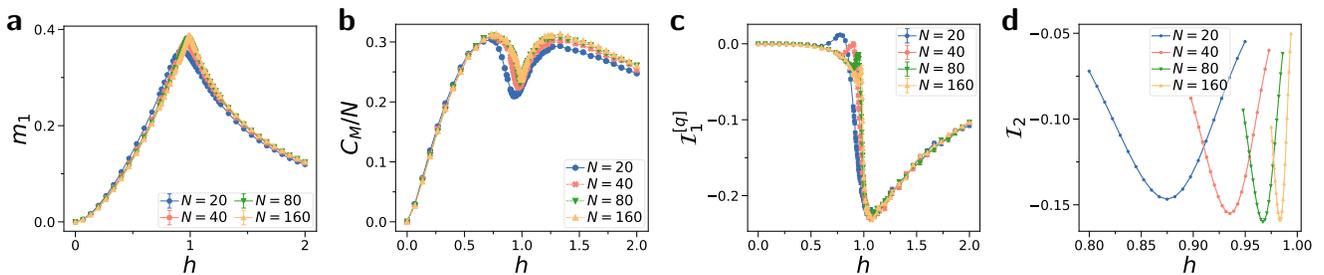

	\centering	
	\subfigimg[width=0.24\textwidth]{a}{meanshannonResCombPN20D20m3p10J1h0g0S100P0C1R-1c11M100000_hD31p0_0P2_0.pdf}
    	\subfigimg[width=0.24\textwidth]{b}{varshannonbareNResCombPN20D20m3p10J1h0g0S100P0C1R-1c11M100000_hD31p0_0P2_0.pdf}
        \subfigimg[width=0.24\textwidth]{c}{meanshannonmutualResCombPN20D20m3p10J1h0g0S100P0C1R-1c11M100000_hD31p0_0P2_0.pdf}
         \subfigimg[width=0.24\textwidth]{d}{MAINmutualmagicMgN160D10m3p80J1h1g0S150P1C1R4_hD21p0_975P0_994.pdf}
    \caption{Von-Neumann SRE for groundstate of TFIM against field $h$, where we choose the Hamiltonian in the standard computational basis. \idg{a} $m_1=M_1/N$ and \idg{b} $C_M$ rescaled by $1/N$.  \idg{c} Mutual von-Neumann SRE $\mathcal{I}_1^{[q]}$ and \idg{d} Mutual $2$-SRE for bipartitions of size $N/2$. We use $K=10^5$ Pauli samples to estimate $m_1$, $C_M$ and $\mathcal{I}_1^{[q]}$.
	}
	\label{fig:magicIsing}
\end{figure*}

The mutual SRE is also highly useful to determine the critical point of the TFIM independent of the chosen local basis. The amount of magic changes when the local basis of the Hamiltonian is rotated as $H'=V^{\otimes{N}}H{V^{\otimes{N}}}^\dagger$ with single-qubit unitary $V$. As a result, depending on $V$, there may be no extremum in $M_\alpha$ at the critical point~\cite{haug2022quantifying}. 
In contrast, we find that mutual SREs such as $\mathcal{I}_1^{[q]}$ or $\mathcal{I}_2$ show a peak at $h=1$ independent of the choice of $V$. 
\revA{We perform a scaling analysis for $\mathcal{I}_2$ in Appendix~\ref{sec:MutualTransition}, where we find via fitting a critical point $h_\text{c}^\text{fit}\approx 1.004$.}
We highlight this for the choice $V_y=\exp(-i\sigma^y\pi/8)$ in Fig.~\ref{fig:basis_TFIM}. While in Fig.~\ref{fig:basis_TFIM}a $m_2$ does not show an extremum at the critical point (the observed minimum does not converge towards $h=1$ even for large $N$~\cite{haug2022quantifying}), a minimum at $h=1$ clearly emerges for $\mathcal{I}_2$ in Fig.~\ref{fig:basis_TFIM}b. 
We study the criticality via mutual SRE further in  Appendix~\ref{sec:MutualTransition} for different basis and types of mutual SRE. 

\begin{figure}[htbp]
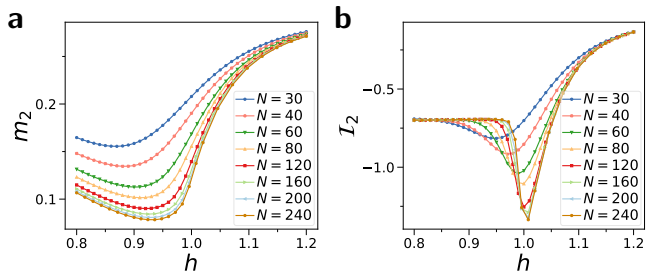

	\centering	
	\subfigimg[width=0.24\textwidth]{a}{magicCombMN240D10m3p120J1h1g0S150P1C1R10c3T4a0_5_hD11p0_9P1_02.pdf}\hfill
    	\subfigimg[width=0.24\textwidth]{b}{mutualmagicCombMN240D10m3p120J1h1g0S150P1C1R10c3T4a0_5_hD11p0_9P1_02.pdf}
    \caption{$2$-SRE for groundstate of TFIM with field $h$ in a locally rotated basis $V_y=\exp(-i\sigma^y\pi/8)$ for different $N$ to determine critical point $h=1$. We show \idg{a} $m_2$ and \idg{b} $\mathcal{I}_2$.
	}
	\label{fig:basis_TFIM}
\end{figure}

Finally, we note that recent conformal field theory (CFT) calculations~\cite{hoshino2025stabilizer} reveal that the term $B(\rho)$ in~\eqref{eq:i2} displays a universal logarithmic scaling. In Ising CFT, the logarithmic scaling has the same prefactor as the 2-R\'enyi mutual information $I_2$. As a result, the
logarithmic scaling term cancels out in the mutual 2-SRE, leaving a constant term that is likely non-universal. Our data is consistent with the CFT prediction, as we find no logarithmic divergence in the standard basis (i.e. Fig.~\ref{fig:magicIsing}). Nevertheless, the mutual SRE still displays non-analytic behavior at the critical point in this case.

\section{Discussion} \label{sec:discussion}
In this work, we present several advancements in the efficient estimation and characterization of magic in many-body quantum systems and quantum circuits.

First, we introduced the mutual von Neumann SRE as a proxy of nonlocal magic, which can be efficiently computed using perfect Pauli sampling in $O(N\chi^3)$ time for MPS. We find two different variants $\mathcal{I}_1$ and $\mathcal{I}_1^{[q]}$, where for the latter we provide a simpler algorithm for computation.
Moreover, we develop two new algorithms based on Metropolis-Hastings Monte Carlo methods to compute the mutual 2-SRE. Remarkably, our analysis reveals that mutual SREs consistently identify critical points in the TFIM regardless of the local basis chosen. This contrasts sharply with the limitations of the bare SRE, which fails to detect criticality under basis rotations~\cite{haug2022quantifying}. This demonstrates the enhanced robustness of mutual SREs for characterizing quantum phase transitions, independent of the local basis.

Curiously, we find that the mutual SRE does not exhibit logarithmic divergence at the critical point in the natural basis, which is also in agreement with the CFT prediction~\cite{hoshino2025stabilizer}. This behavior is in contrast to the findings of previous works~\cite{tarabunga2024critical,tarabunga2024magictransition}, which found logarithmic scaling of mutual magic in critical ground states~\cite{tarabunga2024critical} and monitored circuits~\cite{tarabunga2024magictransition}. This difference is possibly due to the fact that the latter two works employed genuine measures of magic for mixed states to define the mutual magic. This calls into question about the role of mutual SRE as a proxy of nonlocal magic. The efficient methods to compute mutual SRE we developed in this work would enable systematic investigation of the mutual SRE, paving the way for a deeper understanding on its relation to nonlocal magic.   

Further, we define the magic capacity $C_M$, in analogy to entanglement capacity, and discuss its connection to the anti-flatness of the Pauli spectrum. This parallels the relationship between entanglement capacity and the anti-flatness of the entanglement spectrum. We show that magic capacity can be viewed as a measure of magic, which however probes a different type of complexity. Namely, it characterizes the sampling complexity to compute the von-Neumann SRE. Importantly, magic capacity can also be efficiently computed for MPS in $O(N\chi^3)$ time.

\revA{Using $C_M$, we can characterize the T-gate density $z$ that is necessary to generate typical states using Clifford+T circuits. We find that the curves of $C_M$ and $\partial_z C_M/N$ for different $N$ intersect at a single point at $z_{\text{c},C_\text{M}}\approx 2.35$, indicating a saturation transition at $z_{\text{c},C_\text{M}}$. 
The transition is accompanied by a shift in the scaling of $C_M$, from volume-law to being independent of $N$. This saturation transition also places a lower bound of $z\gtrapprox2.35$ needed for Clifford+T circuits to approximate Haar random states~\cite{leone2021quantum,haferkamp2022efficient,leone2024decoders}. This contrasts $8$-point OTOCs or $M_1$, which are less accurate indicators of Haar randomness, as they saturate already for smaller $z\gtrapprox2$~\cite{leone2024decoders}.
The saturation transition in $C_M$ also implies a sharp change in the number of Pauli samples $K$ needed to estimate $M_1$: The estimation of $M_1$ requires only a number of samples which is independent of $N$ whenever $z\gtrapprox z_{\text{c},C_\text{M}}$. }

In fact, we find that the magic capacity beyond the transition point is consistent with that of typical states. In Ref.~\cite{turkeshi2023pauli}, a quantity called filtered SRE was introduced to distinguish the magic of typical and atypical states. It was shown that the density of the filtered SRE, as a function of the Rényi index $\alpha$, exhibits distinct behavior for these two classes of states. The filtered SRE was specifically designed to address the problem that this distinguishing feature is exponentially suppressed in the standard SRE. Our work establishes that this distinction can also be observed in the magic capacity, offering an alternative method for distinguishing typical and atypical states, which we corroborate through our numerical results. Because the distinguishing feature lies in the \textit{scaling} of magic capacity, rather than simply the density as in the filtered SRE, this new method offers a more pronounced separation between typical and atypical states. Furthermore, it has a particular advantage of efficient computability via Pauli sampling.

Finally, we investigate the magic capacity in the ground state of the Heisenberg and TFIM. For the Heisenberg model, we find that for $\Delta\ll1$, the magic capacity scales linearly with system size, $C_M\propto N$. However, at $\Delta\approx1$, the scaling transitions to $C_M\propto N^2$, which asymptotically saturates the known upper bound $C_M=O(N^2)$. 
For the Ising model, we find that the capacity scales as $C_\text{M}\propto N$ for all $h$, but exhibits a characteristic dip at the critical point $h=1$. This highlights that the magic capacity can be used to characterize the ground states of Hamiltonians.

Our work opens several avenues for future research. First, magic capacity could be applied as an efficient tool to characterize transitions of magic in random quantum circuits, driven by competition of Clifford and non-Clifford resources~\cite{niroula2023phase,bejan2023dynamical,fux2023entanglementmagic,tarabunga2024magictransition}. Furthermore, it is instructive to connect the mutual $\alpha$-SRE to the notion of long-range magic as magic that cannot be removed by finite-depth circuits~\cite{korbany2025longrangenonstabilizerness}. Finally, it would be worth exploring other aspects of magic that may offer enhanced robustness in detecting quantum correlations, criticality, and other crucial physical phenomena, such as topological order or gauge theory. This could uncover novel connections between magic and fundamental quantum properties.

\begin{acknowledgments}
We thank Lorenzo Piroli, David Aram Korbany, Marcello Dalmonte, and Emanuele Tirrito for insightful discussions. P.S.T. acknowledges funding by the Deutsche Forschungsgemeinschaft (DFG, German Research Foundation) under Germany’s Excellence Strategy – EXC-2111 – 390814868. Our MPS simulations have been performed using the iTensor library~\cite{itensor}.

\end{acknowledgments}

\bibliography{bibliography}

\let\addcontentsline\oldaddcontentsline

\appendix

\onecolumngrid
\newpage 

\setcounter{secnumdepth}{2}
\setcounter{equation}{0}
\setcounter{figure}{0}
\setcounter{section}{0}

\renewcommand{\thesection}{\Alph{section}}
\renewcommand{\thesubsection}{\arabic{subsection}}
\renewcommand*{\theHsection}{\thesection}

\clearpage
\begin{center}

\textbf{\large \SMLong{}}
\end{center}
\setcounter{equation}{0}
\setcounter{figure}{0}
\setcounter{table}{0}

\makeatletter

\renewcommand{\thefigure}{S\arabic{figure}}

We provide proofs and additional details supporting the claims in the main text.

\makeatletter
\@starttoc{toc}

\makeatother

\revA{
\section{Proof of marginals for probability distribution $q_\rho(P)$}\label{sec:marginal}

In the main text, we consider the alternative probability distribution over Pauli strings $P$ as
\begin{equation}
q_\rho(P)=2^{-N}\text{tr}(\rho P \rho P)\,.
\end{equation}
A nice property of $q_\rho$ is that, for any subsystem $A$ and its complement $A^\text{c}$, the distribution for the reduced density matrix $\rho_A=\text{tr}_{A^{\text{c}}}(\rho)$ is simply the marginal distribution of $q_\rho$ in $A$. In other words, 
\begin{equation} 
    q_{\rho_A}(P_A) = \sum_{P\in \mathcal{P}_{A^\text{c}}} q_\rho(P_A\otimes P) 
\end{equation}
for any $P_A \in \mathcal{P}_A$.

This can be shown as follows. First, we rewrite
\begin{equation}
    \sum_{P\in \mathcal{P}_{A^\text{c}}} q_\rho(P_A\otimes P) =2^{-N} \text{tr}(P_A\otimes I_{A^\text{c}} \rho P_A\otimes I_{A^\text{c}} \sum_{P \in \mathcal{P}_{A^\text{c}}} I_A \otimes P \rho I_A\otimes P)\,.
\end{equation} 
Now, we can see that $\sum_{P \in \mathcal{P}_{A^\text{c}}} I_A \otimes P \rho I_A\otimes P$ is a Pauli twirl over partition $A^\text{c}$, which is a $1$-design and thus leaves only the identity invariant. In particular, one can repeatedly use the identity $ \sum_{P\in\mathcal{P}_1} P (.) P = 2I_1\text{tr}_1(.)$~\cite{haug2023stabilizer}
to show that
\begin{equation}
\sum_{P \in \mathcal{P}_{A^\text{c}}} I_A \otimes P \rho I_A\otimes P= 2^{N_{A^\text{c}}}\rho_A \otimes I_{A^\text{c}}\,.
\end{equation} 
Using this result, we get
\begin{gather}
    \sum_{P\in \mathcal{P}_{A^\text{c}}} q_\rho(P_A\otimes P) =2^{-N_A} \text{tr}(P_A\otimes I_{A^\text{c}} \rho P_A\otimes I_{A^\text{c}} \rho_A \otimes I_{A^\text{c}})   \\
    =2^{-N_A} \text{tr}( \text{tr}_{A^\text{c}}(P_A\otimes I_{A^\text{c}} \rho P_A\otimes I_{A^\text{c}} \rho_A \otimes I_{A^\text{c}})=2^{-N_A} \text{tr} (P_A \rho_A P_A \rho_A ))\\
    \equiv q_{\rho_A}(P_A)\,.
\end{gather} 
}

\section{Comparison of mutual SREs}\label{sec:diffMutual}

In this section, we compare the different types of mutual SREs as introduced in the main text. We study the mutual 2-SRE $\mathcal{I}_2$,  mutual von-Neumann SRE $\mathcal{I}_1^{[p]}$ via $\text{tr}(\rho P)^2$, an $\mathcal{I}_1^{[q]}$ via $\text{tr}(\rho P\rho P)$.

First, we study the ground state of the TFIM in Fig.~\ref{fig:compareMutual}a where we choose a low number of qubits $N=12$ and large number of qubits $N=80$ in Fig.~\ref{fig:compareMutual}b. For $N=80$, we estimate $\mathcal{I}_1^{[p]}$ using $K=2\times10^6$ Pauli samples, while for $\mathcal{I}_1^{[q]}$ we use much lower $K=10^5$ samples. We note that $\mathcal{I}_1^{[p]}$ requires significantly more samples to estimate due the $p$ distribution of a subsystem not being a marginal for the full distribution, thus increasing the variance of the estimator.

Next, we study the Heisenberg model in Fig.~\ref{fig:compareIpNew}. We find find that $\mathcal{I}_2$ decreases towards $\Delta=1$, while $\mathcal{I}_1^{[q]}$ and $\mathcal{I}_1^{[p]}$ increase. Notably, we find that for large $N$ and $\Delta\approx1$, $\mathcal{I}_1^{[q]}$ and $\mathcal{I}_1^{[p]}$ show different scalings: $\mathcal{I}_1^{[q]}$ is constant, while $\mathcal{I}_1^{[p]}$ scales with $N$.

\begin{figure}[htbp]
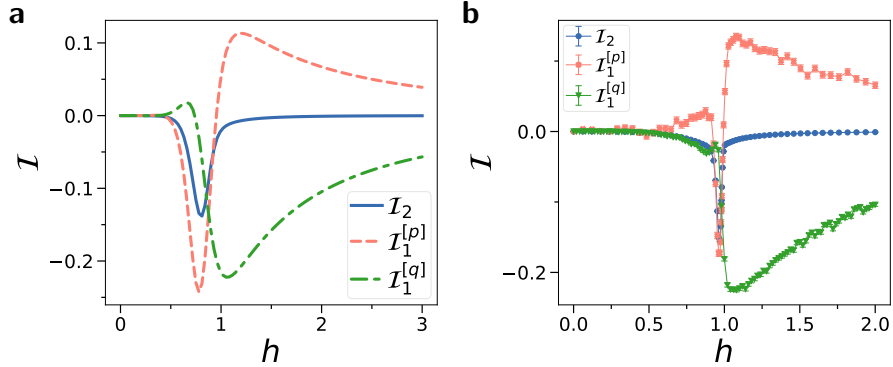

	\centering	
	\subfigimg[width=0.33\textwidth]{a}{comparemagicmutalEDmagicNew2MutualN12m0b0D101_hp0_0001P3_0S0_075O0e0E0h.pdf}
 	\subfigimg[width=0.33\textwidth]{b}{meanshannonmutualResMutualN80D8m3p40J1h0_1g0S100P0C1R-1c3_hsqD64p0_0P2_0.pdf}
    \caption{Comparison of mutual 2-SRE $\mathcal{I}_2$,  mutual von-Neumann SRE $\mathcal{I}_1^{[p]}$ via $\text{tr}(\rho P)^2$, and $\mathcal{I}_1^{[q]}$ via $\text{tr}(\rho P\rho P)$. %
    We show bipartition at the center of the chain with \idg{a} $N=12$ and \idg{b} $N=80$ qubits in total.
	}
	\label{fig:compareMutual}
\end{figure}

\begin{figure}[htbp]
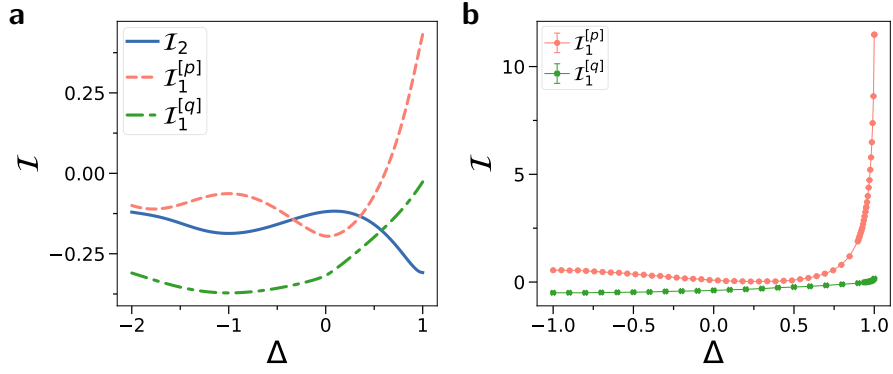

	\centering	
	\subfigimg[width=0.33\textwidth]{a}{comparemagicmutalEDmagicNew2MutualN12m7b0D101_hp-2_0P0_9999S0_075O0e0E0h.pdf}
    	\subfigimg[width=0.33\textwidth]{b}{meanshannonmutualResMutualN80D20m4p40J1h1_0g0S100P0C1R-1c13M1000000.pdf}
    \caption{Different types of mutual magic for ground state of Heisenberg model for anisotropy $\Delta$. Comparison of mutual 2-SRE $\mathcal{I}_2$, mutual von-Neumann SRE $\mathcal{I}_1^{[q]}$ via $\text{tr}(\rho P\rho P)$, and its formulation $\mathcal{I}_1^{[p]}$ via $\text{tr}(\rho P)^2$. 
    We show bipartition at the center of the chain with \idg{a} $N=12$ and \idg{b} $N=80$ qubits in total.
	}
	\label{fig:compareIpNew}
\end{figure}

Now, we study the scaling of mutual magic $\mathcal{I}$ in more detail. While mutual magic is usually independent of $N$, we find classes of states where it scales with $N$. In particular, we study a magical GHZ state
\begin{equation}\label{eq:GHZ}
    \ket{\psi_\text{mGHZ}}=\frac{1}{\sqrt{2}}\exp(-i\pi/8\sigma^z)^{\otimes N}\exp(-i\beta/2 \sigma^y)^{\otimes N}(\ket{00\dots0}+\ket{11\dots1})
\end{equation}
where each qubit is rotated by $\exp(-i\pi/8\sigma^z)\exp(-i\beta/2 \sigma^y)$ with $\beta=\arccos(1/\sqrt{3})$, which corresponds to the single-qubit rotation that induces the most magic.
For $\mathcal{I}_1^{[q]}(\ket{\psi_\text{mGHZ}})$ and $\mathcal{I}_2(\ket{\psi_\text{mGHZ}})$, the mutual magic is positive and constant with $N$. In contrast, we observe $\mathcal{I}_1^{[p]}(\ket{\psi_\text{mGHZ}})\propto -N$. We believe this may originate from the fact that the $p$ distribution on subsystems is not a marginal distribution with respect to the full system. This finding indicates that the mutual magic based on $p$ distribution may not be related to the notion of long-range magic as magic that cannot be removed by finite-depth circuits~\cite{korbany2025longrangenonstabilizerness,wei2025longrangenonstabilizerness}, and future work may focus on $q$ distribution to establish a connection.
\begin{figure}[htbp]
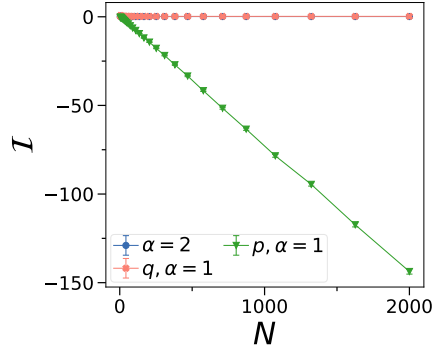

	\centering	
	\subfigimg[width=0.33\textwidth]{}{meanshannonmutualResMutualN2000D10m16p1000J1h0_1g0S100P0C1R-1c3T2a0_5_NlogD31p4.pdf}
    \caption{Different types of mutual magic for a magical GHZ state as defined in~\eqref{eq:GHZ}. Comparison of mutual 2-SRE $\mathcal{I}_2$, mutual von-Neumann SRE $\mathcal{I}_1^{[q]}$ via $\text{tr}(\rho P\rho P)$, and its formulation $\mathcal{I}_1^{[p]}$ via $\text{tr}(\rho P)^2$. 
    We show bipartition at the center of the chain with $N=80$ qubits in total.
	}
	\label{fig:GHZ}
\end{figure}

\section{Magic capacity of typical states}\label{sec:typical}

The phenomenological SRE of typical states was derived in Ref.~\cite{turkeshi2023pauli}, which reads for large $N$ 
\begin{equation}
   M_\alpha^\mathrm{typ} = \frac{1}{1-\alpha}\ln \left[ \frac{(\eta-1) (2b)^\alpha   \Gamma \left(\alpha+\frac{1}{2}\right)}{\sqrt{\pi }d }+\frac{1}{d}\right],\label{eq:gausssre}
\end{equation}
where $\Gamma(x)$ is the gamma function, $b=(d/2+1)^{-1}$, $d=2^N$, and  ${\eta=d^2}$ (${\eta=d(d+1)/2}$) for complex (real) typical states. In the limit of large $N$, the average von-Neumann SRE is given by
\begin{equation}
    M_1^\mathrm{typ} = \lim_{\alpha \to 1} M_\alpha^\mathrm{typ} = N \ln(2) - 2 \ln(2) - \frac{\Gamma' \left(3/2\right)}{\Gamma \left(3/2\right)}  ,
\end{equation}
and the magic capacity, computed using \eqref{eq:Cm_from_sre}, is given by
\begin{equation}\label{eq:typicalcapacity}
    C_M^\mathrm{typ} = \frac{\Gamma'' \left(3/2\right)}{\Gamma \left(3/2\right)} - \left( \frac{\Gamma' \left(3/2\right)}{\Gamma \left(3/2\right)}\right)^2\approx 0.934802\dots\,.
\end{equation}
We see that the magic capacity becomes constant in the limit $N \to \infty$.

This behavior can be contrasted with the magic capacity of the (atypical) tensor product of single-qubit state $\ket{\Theta} = \cos(\theta/2)\ket{0}+ e^{-i \phi}\sin(\theta/2)\ket{1}$ for generic $0<\theta<\pi/2$ and $\phi$. We have that 
\begin{equation}
    C_M  \left( \ket{\Theta}^{\otimes N} \right)  = N C_M  \left( \ket{\Theta} \right) \,,
\end{equation}
i.e. the magic capacity scales linearly with system size. This distinct scaling behavior clearly distinguishes typical and atypical states.
Note that $C_M$ is bounded as $C_M=O(N^2)$, where we find that the ground state of the Heisenberg model for $\Delta=1$ satisfies such scaling.

\section{Mutual von-Neumann SRE for MPS}\label{sec:mutualNeumann}

We now show that the mutual von-Neumann SRE $\mathcal{I}_1^{[q]}(\ket{\psi})$ can be efficiently computed with time $O(N\chi^3\epsilon^{-2})$ for MPS 
\begin{equation}\label{eq:MPS}
	\ket{\psi}=\sum_{\{s_{k}\}} A^{s_{1}}_1 \ldots A^{s_{N}}_N\ket{s_{1}, \ldots, s_{N}}\,,
\end{equation}
where $A^{s}_{k}$ are matrices with maximal size $\chi\times\chi$.
We assume two complementary bipartitions $A$ and $B$. 
The $2$-R\'enyi entanglement $S_2(\rho_A)$ and $S_2(\rho_B)$ can be efficiently computed for MPS via standard methods, where $\rho_A$, $\rho_B$ are the reduced density matrix of the two bipartitions.
Then, $\mathcal{I}_1^{[q]}(\ket{\psi})$ can be computed as
\begin{align*}
    \mathcal{I}_1^{[q]}(\rho)&=\underset{P\sim q(P)}{\mathbb{E}}[\ln(\text{tr}(\rho P \rho P))-\ln(\text{tr}(\rho_A P_A \rho_A P_A))\\
    &-\ln(\text{tr}(\rho_B P_B \rho_B P_B))]+S_2(\rho_A)+S_2(\rho_B)
\end{align*}
where $P_A$ and $P_B$ are Pauli string $P$ reduced onto subsystem $A$ and $B$ respectively, such that $P=P_A \otimes P_B$. 
Crucially, we need to sample from $q(P)$ which can be done in $O(N\chi^3)$ as shown in Ref.~\cite{haug2023stabilizer}.
In particular, this can be done by ancestral sampling, using the following relation:
\begin{equation}
    q(P)=q(P_1)q(P_2|P_1)\dots q(P_N|P_1\otimes P_2\otimes \dots \otimes P_{N})
\end{equation}
where $P_i\in\{I,\sigma^x,\sigma^y,\sigma^z\}$ are single-qubit Pauli operators acting on the $i$th qubit and $q(P_2|P_1)$ conditional probabilities.

We start by computing the probabilities for the reduced density matrix $\rho_1=\text{tr}_{2,\dots N}(\rho)$ over the first qubit
\begin{equation}
    q(P_1)=\sum_{P_2,\dots,P_N} q(P_1,P_2,\dots,P_N)=\frac{1}{2}\text{tr}(\rho_1 P_1\rho_1 P_1)
\end{equation}
for all $P_1$. Here, we used the fact that after tracing out the $k$th qubit of $\rho$, the probability distribution over the reduced system can be written as
\begin{equation}\label{eq:q_ptrace_law}
q_{\text{tr}_k(\rho)}(P)=\sum_{m=0}^3 q_\rho(P\otimes P_k^m)\,.
\end{equation}
We sample a Pauli operator $P_1^\ast$ for the first qubit according to $q(P_1)$. Then, we proceed with the second qubit as
\begin{align*}
    q(P_2|P_1^\ast)&=\sum_{P_3,\dots,P_N} q(P_1^\ast,P_2,\dots,P_N)\\
    &=\frac{2^{-2}\text{tr}(\rho_{1,2} P_1^\ast\otimes P_2\rho_{1,2} P_1^\ast\otimes P_2)}{2^{-1} \text{tr}(\rho_1 P_1^\ast \rho_1 P_1^\ast)}
\end{align*}
where $\rho_{1,2}$ is the reduced density matrix of $\rho$ over the first two qubits. Then, we sample a $P_2$ from $q(P_2|P_1^\ast)$. These steps are now repeated until $N$ qubits are reached, finally gaining a Pauli $P\sim q_\rho(P)$. For improved efficiency, the MPS can first be brought to the left canonical form before performing the sampling procedure, as discussed in detail in Ref.~\cite{lami2023quantum}.
After sampling $P$, one has to evaluate terms of the form $\text{tr}(\rho P \rho P)$ and $\text{tr}(\rho_A P_A \rho_A P_A)$, which can be done in $O(N\chi^3)$.
In particular, the contractions for terms
\begin{equation}\label{eq:contract}
    \text{tr}(\rho_A P_1\otimes P_2\otimes\dots\otimes P_{N_A} \rho_A P_1\otimes P_2\otimes\dots\otimes P_{N_A})
\end{equation}
with $\rho_A=\text{tr}_{\bar{A}}(\rho)$ can be done as shown in Fig.~\ref{fig:contract} with time complexity $O(N\chi^3)$ for MPS of bond dimension $\chi$. Here, $A_i$ denotes the tensor corresponding to the $i$th qubit of the MPS, $P_i$ are the Pauli operator for the $i$th qubit, and lines are contracted over. For optimal complexity, one first contracts over the complement of subspace $A$, then the bonds that involve Pauli operators, and finally the remaining bonds.

\begin{figure}[htbp]
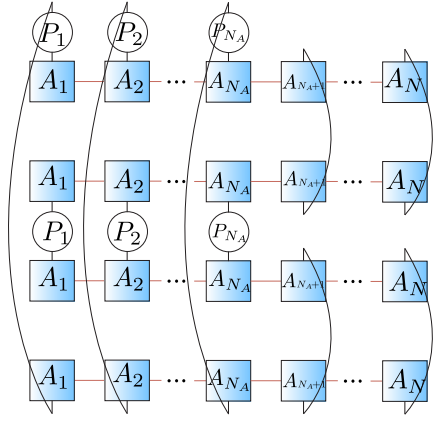

	\centering	
	\subfigimg[width=0.32\textwidth]{}{MutualMagicContraction.pdf}
    \caption{MPS contraction to compute~\eqref{eq:contract}.
	}
\label{fig:contract}
\end{figure}

\section{Statevector simulation of Pauli sampling}\label{sec:alg}

We now show an improved statevector algorithm for Pauli sampling and Bell sampling. Here, by statevector simulation we mean that we describe the state by its full statevector amplitude $\ket{\psi}=\sum_i a_i\ket{i}$ which describes arbitrary quantum states.
In contrast to naive sampling, which takes $O(8^N)$ time and $O(4^N)$ memory, our method requires only $O(8^{N/2})$ time and $O(2^N)$ memory. While MPS Pauli sampling in $O(N\chi^3)$ has the same asymptotic scaling (for worst case $\chi=2^{N/2}$)~\cite{lami2023quantum,haug2023stabilizer}, we note that our algorithm directly works in the statevector representation of states, which is beneficial when working with states already given in statevector representation.

Note that our algorithm perfectly samples from the Pauli distribution directly, and does not use Monte-Carlo approaches to approximately sample.
We apply our method to compute the von-Neumann SRE and magic capacity in Sec.~\ref{sec:algSE}.

First, we define the Pauli matrices $\sigma_{00}=I_{2}$, $\sigma_{01}=\sigma^x$, $\sigma_{10}=\sigma^z$ and $\sigma_{11}=\sigma^y$.
The $4^N$ Pauli strings are $N$-qubit tensor products of Pauli matrices which we define as $\sigma_{\boldsymbol{n}}=\bigotimes_{j=1}^N \sigma_{\boldsymbol{n}_{2j-1}\boldsymbol{n}_{2j}}$ with $\boldsymbol{n}\in\{0,1\}^{2N}$. The Bell states are given by $\ket{\sigma_{00}} = \frac{1}{\sqrt{2}} \left( \ket{00} + \ket{11} \right)$, $\ket{\sigma_{01}}= \frac{1}{\sqrt{2}} \left( \ket{00} - \ket{11} \right)$, $\ket{\sigma_{10}} = \frac{1}{\sqrt{2}} \left( \ket{01} + \ket{10} \right)$ and $\ket{\sigma_{11}}=\frac{1}{\sqrt{2}} \left( \ket{01} - \ket{10} \right)$ and we define the product of Bell states $\ket{\sigma_{\boldsymbol{r}}}=\ket{\sigma_{r_1r_2}}\otimes\dots\otimes\ket{\sigma_{r_{2N-1}r_{2N}}}$.

We now want to sample from the probability distribution $p_{\ket{\psi}}(\sigma)=2^{-N}\bra{\psi}\sigma\ket{\psi}^{2}$. This can be done by sampling the Bell basis via~\cite{montanaro2017learning,lai2022learning}
\begin{align*}
p(\sigma_{\boldsymbol{r}})=2^{-N}\vert\bra{\psi}\sigma_{\boldsymbol{r}}\ket{\psi}\vert^2=\vert\langle\boldsymbol{r}\vert\eta\rangle\vert^2\,,\numberthis
\end{align*} 
where $\ket{\boldsymbol{r}}$ is the computational basis state corresponding to bitstring $\boldsymbol{r}$. Here, we have the Bell transformed state $\ket{\eta}=U_\text{Bell}^{\otimes N}\ket{\psi^*}\otimes\ket{\psi}$ with Bell transformation $U_\text{Bell}=(H\otimes I_1) \text{CNOT}$,
where $H=\frac{1}{\sqrt{2}}(\sigma^x+\sigma^z)$ is the Hadamard gate, $\text{CNOT}=\exp(i\frac{\pi}{4}(I_1-\sigma^z)\otimes(I_1-\sigma^x))$ and $\ket{\psi^*}$ is the complex conjugate of $\ket{\psi}$. 
Thus, sampling $\ket{\boldsymbol{r}}$ in the computational basis from $\ket{\eta}$ is equivalent to sampling $\sigma_{\boldsymbol{r}}$ from $p(\sigma_{\boldsymbol{r}})$. However, storing the $2N$-qubit state $\ket{\eta}$ with the Hilbert space dimension of $4^N$ in memory is a major bottleneck.

To solve this problem, we provide a hybrid Schrödinger-Feynmann algorithm to sample $\sigma_{\boldsymbol{r}}\sim p(\sigma_{\boldsymbol{r}})$ which combines Bell and Feynmann-type approach for sampling for improved performance:
\begin{theorem}[Simulation of Pauli sampling in statevector representation]\label{thm:bellSM}
Given an $N$-qubit state $\ket{\psi}=\sum_{i=1}^{2^N}a_i\ket{i}$, there is an algorithm to sample $\sigma$ from the probability distribution $p(\sigma)=2^{-N}\vert\bra{\psi}\sigma\ket{\psi}\vert^2$ in $O(8^{N/2})$ time and $O(2^{N})$ memory within the statevector representation of $\ket{\psi}$. 
\end{theorem}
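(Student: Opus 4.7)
The plan is to split the sample $\boldsymbol{n}\in\{0,1\}^{2N}$ into two $N$-bit halves, $\boldsymbol{n} = (\boldsymbol{n}^{(1)}, \boldsymbol{n}^{(2)})$, with $\boldsymbol{n}^{(1)}$ indexing the Pauli $\sigma_{\boldsymbol{n}^{(1)}}$ on the first $N/2$ qubits of $\ket{\psi}$ and $\boldsymbol{n}^{(2)}$ the Pauli on the remaining $N/2$ qubits, and then to draw $\boldsymbol{n}^{(1)}$ from its marginal followed by $\boldsymbol{n}^{(2)}$ from the induced conditional. The point is that both of these distributions admit closed forms involving only $2^{N/2}\times 2^{N/2}$ matrices, so the $4^N$-dimensional Bell-sampled statevector $\ket{\eta}$ from the setup is never stored.

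To derive the marginal, I would slice $\ket{\psi} = \sum_{x,y}M_{xy}\ket{x}\ket{y}$ along the half-chain bipartition, with $M$ a $2^{N/2}\times 2^{N/2}$ amplitude matrix, so that $\bra{\psi}\sigma_{\boldsymbol{n}^{(1)}}\otimes\sigma_{\boldsymbol{n}^{(2)}}\ket{\psi} = \text{tr}(M^\dagger\sigma_{\boldsymbol{n}^{(1)}}M\sigma_{\boldsymbol{n}^{(2)}}^T)$. Squaring and summing over $\boldsymbol{n}^{(2)}$, the Pauli-orthogonality identity $\sum_{P\in\mathcal{P}_{N/2}}|\text{tr}(AP)|^2 = 2^{N/2}\text{tr}(A^\dagger A)$ (together with $|\text{tr}(A\sigma^T)|^2=|\text{tr}(A\sigma)|^2$ for Paulis) collapses the sum to
\begin{equation*}
p(\boldsymbol{n}^{(1)}) = 2^{-N/2}\,\text{tr}\bigl(\rho_A\,\sigma_{\boldsymbol{n}^{(1)}}\,\rho_A\,\sigma_{\boldsymbol{n}^{(1)}}\bigr),
\end{equation*}
i.e.\ exactly the $q$-distribution of~\eqref{eq:distq} for the $N/2$-qubit reduced density matrix $\rho_A = MM^\dagger$. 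I would first compute $\rho_A$ in $O((2^{N/2})^3) = O(8^{N/2})$ time and $O(2^N)$ memory, and then sample $\boldsymbol{n}^{(1)}$ qubit by qubit by ancestral sampling: at each step the conditional reduces to a single $\text{tr}(\sigma\rho\sigma\rho)$ evaluated on a partial trace of $\rho_A$ via~\eqref{eq:q_ptrace_law}, in direct analogy to the MPS perfect-sampling routine of Appendix~\ref{sec:mutualNeumann}. This phase costs $O(N\cdot 2^N)$, dominated by the preparation of $\rho_A$ itself.

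Conditional on $\boldsymbol{n}^{(1)}$ the remaining distribution becomes
\begin{equation*}
p(\boldsymbol{n}^{(2)}\mid\boldsymbol{n}^{(1)}) \propto \bigl|\text{tr}(A\,\sigma_{\boldsymbol{n}^{(2)}}^T)\bigr|^2,\qquad A \defeq M^\dagger\sigma_{\boldsymbol{n}^{(1)}}M,
\end{equation*}
with $A$ assembled from $M$ in another $O(8^{N/2})$ operations (a sign/permutation Pauli action costing $O(2^N)$ plus one half-chain matrix product). Since $\text{tr}(A\sigma^T) = \sum_{ij}A_{ij}\sigma_{ij}$ is the entrywise overlap of $A$ with $\sigma$, vectorising $A$ into an $N$-qubit pure state $\ket{\phi_A}\propto\sum_{ij}A_{ij}\ket{i,j}$ turns this conditional into a measurement of $\ket{\phi_A}$ in the orthonormal vectorised-Pauli basis, which can again be carried out by ancestral sampling over $\ket{\phi_A}$ in $O(N\cdot 2^N)$ time. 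Correctness follows because sampling from the true marginal and then the true conditional reproduces $p(\boldsymbol{n})$ exactly, and the total per-sample cost is dominated by the two $2^{N/2}$-dimensional matrix products, giving $O(8^{N/2})$ time; working memory never exceeds the sizes of $\rho_A$, $A$, and $\ket{\phi_A}$, each $O(2^N)$.

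The main obstacle — and the step that actually buys the square-root speedup over the naive $O(8^N)$-time, $O(4^N)$-memory enumeration — is establishing the marginal formula above, since it is what lets us replace the $2N$-qubit Bell statevector by the much smaller $N/2$-qubit object $\rho_A$. Once that algebraic reduction is in hand, the rest of the analysis is just bookkeeping: two half-chain matrix multiplications set the time budget at $O(8^{N/2})$, and the two ancestral sampling loops, each running in $O(N\cdot 2^N)$, fit comfortably inside it while never exceeding $O(2^N)$ memory.
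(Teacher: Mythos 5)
Your proposal is correct, and it reaches the stated $O(8^{N/2})$ time and $O(2^N)$ memory by a genuinely different route than the paper. The paper runs a hybrid Schr\"odinger--Feynman scheme directly on the Bell-transformed doubled state $\ket{\eta}=U_\text{Bell}^{\otimes N}\ket{\psi^*}\otimes\ket{\psi}$: it samples the Pauli qubit by qubit for the first $N/2$ sites while representing the projected state implicitly as a superposition of $2^k$ product branches $\ket{\phi_i^*}\ket{\phi_j}$ tracked through a $2^k\times 2^k$ Gram matrix $E_{nm}=a_n^*a_m\langle\phi_n|\phi_m\rangle$ (cost $O(4^k 2^{N-k})$ at step $k$), and only at $k=N/2$ does it materialize the residual $N$-qubit state and finish by direct marginal sampling. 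You instead exploit the algebraic identity that the marginal of $p_{\ket{\psi}}$ over the Paulis on subsystem $B$ is exactly the $q$-distribution of \eqref{eq:distq} for $\rho_A=MM^\dagger$ --- which is the iterated form of \eqref{eq:q_ptrace_law} specialized to a pure state, a fact the paper states but deploys only in its MPS algorithm (Appendix~\ref{sec:mutualNeumann}), not in its statevector proof. This replaces the exponentially branching Feynman bookkeeping of the first phase by two dense $2^{N/2}\times 2^{N/2}$ matrix products plus cheap ancestral sampling on a density matrix; your second phase (vectorizing $A=M^\dagger\sigma_{\boldsymbol{n}^{(1)}}M$ and measuring in the vectorized-Pauli/Bell basis) is essentially the same object as the paper's explicitly reconstructed $\ket{\eta_{\boldsymbol{r}'}}$. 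The trade-off: the paper's branch-by-branch scheme yields the Pauli bits sequentially from the first qubit onward and generalizes naturally to cuts at $k\neq N/2$, while your closed-form marginal is cleaner to verify (normalization follows from $\sum_P P\rho P=2^{N/2}\tr(\rho)I$), makes the $O(8^{N/2})$ bottleneck transparent as two half-chain matrix multiplications, and connects the statevector algorithm directly to the MPS perfect-sampling machinery. Both arguments are sound and give the same asymptotics.
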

\begin{proof}
The overall idea is to sample the first $N$ qubits from $\ket{\eta}$ in a Feynmann-like algorithm, then switch to direct sampling with marginals for the last $N$ qubits.

We start with the Feynmann-like part of our algorithm.
Any $N$-qubit state $\ket{\phi}$ can be written as Schmidt-decomposition with $\ket{\phi}=a_0\ket{0}\ket{\phi_0}+a_1\ket{1}\ket{\phi_1}$, where $\ket{0},\ket{1}$ are computational basis states and $\ket{\phi_0}$, $\ket{\phi_1}$ are normalized $N-1$ qubit states.
For two copies, we have
\begin{equation}
\ket{\phi^*}\ket{\phi}=\sum_{k,q=1}^2a_{k}^*a_{q}\ket{k}\ket{\phi_k^*}\ket{q}\ket{\phi_q}
\end{equation}
We now reorder the positions of the qubits such that the first qubit of each copy are placed first. Then, we apply the Bell transformation $U_\text{Bell}$ on the first qubit of each copy
\begin{align*}
\ket{\eta'}=&U_\text{Bell}\otimes I^{\otimes 2N-2}\sum_{k,q=1}^2a_{k}^*a_{q}\ket{k}\ket{q}\ket{\phi_k^*}\ket{\phi_q}=\numberthis\label{eq:bellsuper}\\
\frac{1}{\sqrt{2}}[&\ket{00}(\vert a_{0}\vert^2\ket{\phi_0^*}\ket{\phi_0}+\vert a_{1}\vert^2\ket{\phi_1^*}\ket{\phi_1})+\\
&\ket{01}( a_{0}^*a_{1}\ket{\phi_0^*}\ket{\phi_1}+ a_{1}^*a_0\ket{\phi_1^*}\ket{\phi_0})+\\
&\ket{10}(\vert a_{0}\vert^2\ket{\phi_0^*}\ket{\phi_0}-\vert a_{1}\vert^2\ket{\phi_1^*}\ket{\phi_1})+\\
&\ket{11}( a_{0}^*a_{1}\ket{\phi_0^*}\ket{\phi_1}- a_{1}^*a_0\ket{\phi_1^*}\ket{\phi_0})
]
\end{align*}
We then measure the first two qubits $\ket{nm}$ in the computational basis with the probability 
\begin{align*}
&P(\ket{nm})=\bra{\eta'}(\ket{nm}\bra{nm}\otimes I^{\otimes 2N-2}) \ket{\eta'}\,.
\end{align*}
As the tensor state $\ket{\phi_n^*}\ket{\phi_m}$ is too large to be stored directly in memory, we instead compute the sampling probabilities via the overlaps between the states $\ket{\phi_n}$ with the overlap matrix $E_{nm}=a_n^\ast a_m\langle\phi_n\vert\phi_m\rangle$. For example for outcome $\ket{00}$, we have 
\begin{align*}
P(\ket{00})=\frac{1}{2}(\vert E_{00}\vert^2+E_{01}^*E_{10}+E_{10}^*E_{01}+\vert E_{11}\vert^2)
\end{align*}
and the normalized projected state
\begin{equation}\label{eq:superpos}
\ket{\eta_{00}}=\frac{1}{\sqrt{2P(\ket{0}\ket{0})}}(\vert a_{0}\vert^2\ket{\phi_0^*}\ket{\phi_0}+\vert a_{1}\vert^2\ket{\phi_1^*}\ket{\phi_1})
\end{equation}
We now compute $P(\ket{nm})$ and sample from it. After sampling, we gain two bits $n$, $m$ which indicate the first Pauli operator $\sigma_{nm}$ of the full Pauli string, as well as the normalized projected state $\ket{\eta_{nm}}$. 
Note that we do not store $\ket{\eta_{nm}}$ in memory directly, but only its amplitudes $a_0$, $a_1$ and the states $\ket{\phi_0}$, $\ket{\phi_1}$.

Starting with the projected state $\ket{\eta_{nm}}$, we can now repeat above steps to sample more qubits. For example, continuing with the example $\ket{\eta_{00}}$ of~\eqref{eq:superpos}, this state consists of two superposition states $\ket{\phi_0^*}\ket{\phi_0}$ and $\ket{\phi_1^*}\ket{\phi_1}$. On each superposition state, we repeat the step of~\eqref{eq:bellsuper} and sample the next two qubits, gaining another two bits $k\ell$. Now, the projected state of $2N-4$ qubits $\ket{\eta_{nmk\ell}}$ is described by $4$ superposition states. 
After $k$ repetitions, we have sampled $2k$ bits $\boldsymbol{r}'$ which correspond to the Pauli operators for the first $k$ qubits. The projected state $\ket{\eta_{\boldsymbol{r}'}}$ has $2N-2k$ qubits and is described by a superposition of $2^k$ states. After $k=N$ steps, we would have sampled the full Pauli string. The number of superposition state needed scales exponentially with $k$.
Here, the main complexity arises from computing the overlap matrix $E$ needed for sampling, which has $(2^k+1)2^k/2$ non-trivial entries. The time complexity scales as $O(4^k2^{N-k})$, while the memory consumption is $O(4^k)$.

If we were to continue sampling until the last qubits, there would be no advantage in terms of computational effort. However, we can reduce complexity by stopping early and doing only $k=N/2$ sampling steps, which has a complexity of $O(8^{N/2})$. 
At this point, the projected state of $N$ qubits $\ket{\eta_{\boldsymbol{r}'}}$ is a superposition state of $2^{N/2}$ states. Note that the $2^{N/2}$ states that make up the superposition state take only $O(2^N)$ memory to store in total. 
After $k=N/2$ sampling steps, we switch to a direct sampling approach by explicitly constructing the state $\ket{\eta_{\boldsymbol{r}'}}$ of dimension $2^N$ by summing over the $2^{N/2}$ linear combinations of states with their corresponding coefficients.  Constructing the state explicitly has a time complexity of $O(8^{N/2})$ and memory complexity $O(2^N)$. 
Then, we directly sample from $\ket{\eta_r}$ via its marginals in a time $O(2^N)$. This completes sampling of a Pauli string $\sigma_{\boldsymbol{r}}$ according to the probability distribution $p(\sigma_{\boldsymbol{r}})$.
Our Bell sampling simulation runs in a time of $O(8^{N/2})$ and requires $O(2^N)$ memory. 
\end{proof}

In contrast to Pauli sampling (which is inefficient on quantum computers~\cite{hinsche2024efficient}) the closely related Bell sampling can be done efficiently in experiment. It samples from distribution $2^{-N}\langle\psi^\ast|\sigma|\psi\rangle^{2}$~\cite{montanaro2017learning} which can be sampled from via $U_\text{Bell}^{\otimes N}\ket{\psi}\otimes\ket{\psi}$. Our algorithm can also be used to perform Bell sampling. Here, one uses a modified Pauli sampling algorithm where one replaces the complex conjugate $\ket{\psi^\ast}$ at every step of the algorithm with the non-conjugated state $\ket{\psi}$.

\section{Improved statevector computation of von-Neumann SRE and magic capacity}\label{sec:algSE}
Now, we apply our Pauli sampling algorithm for statevector simulation to compute SREs and magic capacity.
First, we want to compute the von-Neumann SRE
\begin{equation}\label{eq:SEneumann_sample_sup}
    M_1(\ket{\psi})=-\sum_{\sigma\in \mathcal{P}}p(\sigma)\ln(\bra{\psi}\sigma\ket{\psi}^{2})=-\underset{\sigma\sim p(\sigma)}{\mathbb{E}}[\ln(\bra{\psi}\sigma\ket{\psi}^{2})]\,.
\end{equation}
We perform Pauli sampling $K$ times, gaining $K$ Pauli strings $\{r_j\}_{j=1}^K$.
From this, we compute the unbiased estimator of $M_1$ 
\begin{equation}
\hat{M}_1=-\frac{1}{K}\sum_{j=1}^K\ln(\bra{\psi}\sigma_{\boldsymbol{r}_j}\ket{\psi}^2)\,,
\end{equation}
where $\bra{\psi}\sigma_{\boldsymbol{r}_j}\ket{\psi}^2$ can be evaluated in $O(2^{N})$ time.
The estimator has a variance
\begin{equation}
    C_M=\text{var}(\hat{M}_1)=\underset{P\sim p(P)}{\mathbb{E}}[\ln(\bra{\psi}P\ket{\psi}^2)^2]-M_1^2
\end{equation}
which is given by the magic capacity $C_M$.
When averaging over $K$ samples, the estimated mean value $\bar{M}_1$ has a standard deviation $\sqrt{C_M/K}$. 

Now, we want to estimate the simulation complexity to achieve a given precision $\epsilon$ in the estimation of $M_1$. For this we need to bound the variance of our estimator given by $C_M=\text{var}(\hat{M}_1)$.
One can upper bound this via $\text{var}(\ln(\bra{\psi}\sigma\ket{\psi}^2))\leq N^2\ln(2)^2+1$~\cite{lami2023quantum}. 
Thus, the estimation error $\epsilon$ of the average $\bar{M}_1$ scales in the worst case as 
\begin{equation}
    \epsilon\sim \sqrt{C_M/K}\sim \frac{N}{\sqrt{K}}\,.
\end{equation}
Note that to estimate the SRE density $m_1=M_1/N$, we have similarly
\begin{equation}
    \tilde{\epsilon}\propto \sqrt{C_M/(KN^2)}\sim \frac{1}{\sqrt{K}}\,.
\end{equation}
Overall, our algorithm scales as $O(8^{N/2} \epsilon^{-2})$ in time and $O(2^N)$ in memory, giving a square-root speedup over naive methods to compute SREs for statevectors. In particular, our method can compute the von-Neumann SRE of arbitrary states for 24 qubits using a standard notebook, while naive methods are limited to $15$ qubits.

\revA{
\section{Critical point of TFIM via SREs and magic capacity}\label{sec:crit}
Here, we show how to determine the critical point from the SREs and magic capacity.
The TFIM has a critical point at $h=1$, where
the ground state acquires universal behavior.

It has been shown that in the standard basis, the SRE $M_2$ can be used to determine the critical point~\cite{haug2022quantifying}.
Here, we use $M_1$, $\mathcal{I}_1$ and $C_M$ to determine the critical point.
In Fig.~\ref{fig:magicIsingCrit}, we plot $m_1$, $C_M/N$ and $\mathcal{I}_1^{[q]}$ against $h$ for different $N$. Close to $h=1$, we find that $m_1$ and $\mathcal{I}_1^{[q]}$ have a sharp peak, while $C_M/N$ shows a minima. We record the extremal value and corresponding field $h_0$ for all $N$. 
Then, we fit $h_0(N)$ with $h_0(N)=-c N^{-\gamma}+h_c^\text{fit}$ and use $h_c^\text{fit}$ as approximation of the critical point. We find that $m_1$, $C_M/N$ and $\mathcal{I}_1^{[q]}$ provide accurate estimations of the critical points.

\begin{figure*}[htbp]
	\centering	
	\subfigimg[width=0.24\textwidth]{a}{meanshannonResCombPN20D20m3p10J1h0g0S100P0C1R-1c11M100000_hD31p0_0P2_0.pdf}
    	\subfigimg[width=0.24\textwidth]{b}{varshannonbareNResCombPN20D20m3p10J1h0g0S100P0C1R-1c11M100000_hD31p0_0P2_0.pdf}
        \subfigimg[width=0.24\textwidth]{c}{meanshannonmutualResCombPN20D20m3p10J1h0g0S100P0C1R-1c11M100000_hD31p0_0P2_0.pdf}\\
        \subfigimg[width=0.24\textwidth]{d}{hCritResCombPN20D20m3p10J1h0g0S100P0C1R-1c11M100000_hD31p0_0P2_0.pdf}
    	\subfigimg[width=0.24\textwidth]{e}{hCritCapacityResCombPN20D20m3p10J1h0g0S100P0C1R-1c11M100000_hD31p0_0P2_0.pdf}
        \subfigimg[width=0.24\textwidth]{f}{hCritMutualResCombPN20D20m3p10J1h0g0S100P0C1R-1c11M100000_hD31p0_0P2_0.pdf}

    \caption{Determining critical point $h_\text{c}$ for groundstate of TFIM against field $h$, where we choose the Hamiltonian in the standard computational basis. We track the extremal value and the corresponding field $h_0(N)$ for different $N$. Then we fit $h_0(N)$ with polynomial $h_0(N)=-c N^(-\gamma)+h_c^\text{fit}$ to determine $h_c^\text{fit}$.
    \idg{a} $m_1=M_1/N$, \idg{b} $C_M/N$ and \idg{c} mutual von-Neumann SRE $\mathcal{I}_1^{[q]}$.
    Next, we plot $h_0(N)$ against $N$. We fit $h_0(N)$ with $h_0(N)=-c N^{-\gamma}+h_c^\text{fit}$, where
    corresponding fit $h_c^\text{fit}$ as horizontal dashed line. We have \idg{d} $m_1$ with $h_c^\text{fit}\approx1.0036$, \idg{e} $C_M$ with $h_c^\text{fit}\approx0.9965$, and \idg{f} $\mathcal{I}_1^{[q]}$ with $h_c^\text{fit}\approx0.9926$.
	}
	\label{fig:magicIsingCrit}
\end{figure*}

}

\section{Mutual SRE for basis-independent critical point of TFIM}\label{sec:MutualTransition}

A question raised by Ref.\cite{haug2022quantifying} was how to estimate the critical point $h_\text{c}$ of the TFIM ground state via magic. It has been noted that in the standard basis, the magic of the ground state shows a clear universal peak at the critical point $h_\text{c}=1$. However, this is not true when one rotates the Hamiltonian into a different local basis via single-qubit unitaries $V$ over all qubits $N$, i.e. $H'=V^{\otimes N}H {V^{\otimes N}}^\dagger$. Then, in general, the magic of the ground state of the rotated Hamiltonian does not have a clear extremum at the critical point~\cite{haug2022quantifying}. The critical point can be identified only by applying additional fitting procedures.

We now argue that the mutual SRE is a good indicator to find the critical point independent of the local basis. 

First, in Fig.\ref{fig:mutual_TFIM} we consider the standard local basis with the mutual 2-SRE $\mathcal{I}_2$ of the ground state of the TFIM as function of $h$.
In Fig.\ref{fig:mutual_TFIM}a, we see that the mutual SRE for different $N$ has a clear peak close to the critical point $h=1$, where for increasing $N$ the peak is approaching $h=1$ asymptotically. Note that the peak is far more pronounced compared to just regarding SRE $m_2$. 
We see the convergence of the extremum to $h=1$ with $N$ more closely in Fig.\ref{fig:mutual_TFIM}b. In Fig.\ref{fig:mutual_TFIM}c, we plot $\mathcal{I}_2-\mathcal{I}^0_2$, where subtract the maximal mutual SRE $\mathcal{I}^0_2$, against $(h-h_0)N$, where $h_0(N)$ is the field where we find the maximal mutual SRE for different $N$. We find that the rescaled curves collapse to a single curve close to the critical point. 
\revA{In Fig.\ref{fig:mutual_TFIM}d, we plot the minimal $h_0(N)$ against $N$. We fit with $h_0(N)=-c N^(-\gamma)+h_c^\text{fit}$, where $h_0(N\rightarrow \infty)=h_c^\text{fit}$ is the fitted critical field, which we show as dashed horizontal line. We find from the fit $h_\text{c}^\text{fit}\approx1.004$, which is close to the analytic value $h_\text{c}=1$.}

\begin{figure*}[htbp]
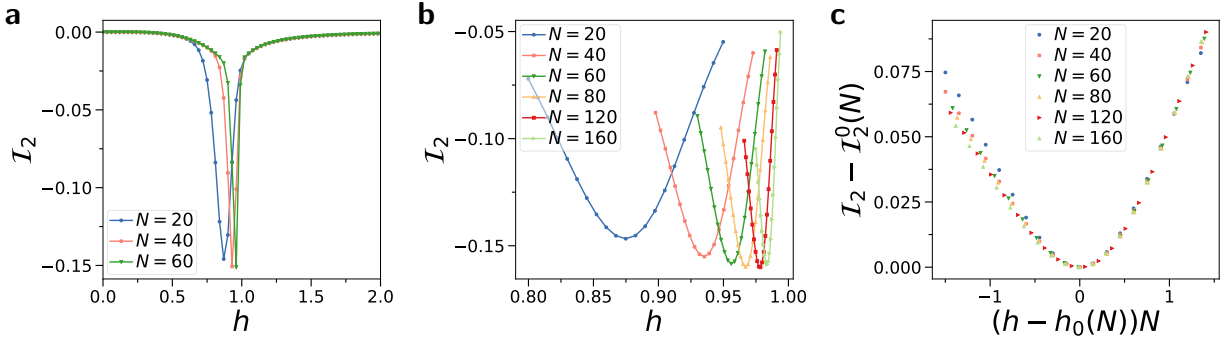

	\centering	
        \subfigimg[width=0.3\textwidth]{a}{mutualmagicMgcN60D6m3p30J1h1g0S150P1C1R4_hD101p0_0P3_0.pdf}
	\subfigimg[width=0.3\textwidth]{b}{mutualmagicMgN160D10m3p80J1h1g0S150P1C1R4_hD21p0_975P0_994.pdf}
	\subfigimg[width=0.3\textwidth]{c}{magicmutualFitMgN160D10m3p80J1h1g0S150P1C1R4_hD21p0_975P0_994.pdf}
    	\subfigimg[width=0.3\textwidth]{d}{hCrit_mutualMgN160D10m3p80J1h1g0S150P1C1R4_hD21p0_975P0_994.pdf}
	\caption{Mutual 2-SRE $\mathcal{I}_2$ for equal bipartition of size $N/2$ close to criticality of the TFIM. 
    \idg{a} $\mathcal{I}_2$ plotted against field $h$.
	 \idg{b} $\mathcal{I}_2$ plotted against field $h$ around critical point $h=1$.
	\idg{c} $\mathcal{I}_2-\mathcal{I}^0_2$ plotted against shifted field $h-h_0$ rescaled with $N$. We define $\mathcal{I}^0_2(N)=\text{max}_h \mathcal{I}$ as the maximal mutual SRE over field $h$, which becomes maximal for $h_0(N)$.
    \idg{d} We plot $h_0(N)$ against $N$. We fit $h_0(N)$ with $h_0(N)=-c N^{-\gamma}+h_c^\text{fit}$, where $h_\text{c}^\text{fit}\approx1.004$ is the fitted critical $h_\text{c}$ for $N\rightarrow\infty$, which we show as horizontal dashed line. 
	} 
	\label{fig:mutual_TFIM}
\end{figure*}

Next, we study study the critical point for different local basis. We rotate $H'=V^{\otimes N}H {V^{\otimes N}}^\dagger$ with local unitary $V$ we choose from $V_\alpha(\theta)=\exp(-i\sigma^{\alpha}\theta/2)$, where $\sigma^\alpha$ is a Pauli matrix.  %

First, we study $V_z(\theta)=\exp(-i\sigma^z\theta/2)$ with $\theta=\pi/4$.
We show $m_2$ in Fig.~\ref{fig:mutual_TFIM_z}a, finding the absence of an extremum at $h=1$. In contrast, we find in Fig.~\ref{fig:mutual_TFIM_z}b that $\mathcal{I}_2$ is clearly peaked towards $h=1$, where we find a minimum that converges towards $h=1$ large $N$. 

\begin{figure*}[htbp]
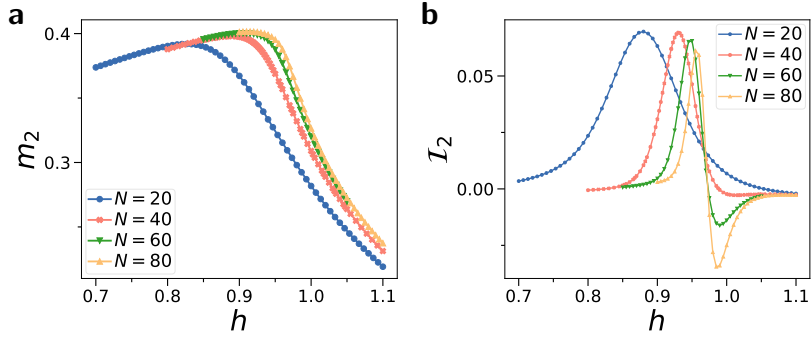

	\centering
	\subfigimg[width=0.3\textwidth]{a}{magicMN80D10m3p40J1h1g0S150P1C1R4c3T1a0_5_hD41p0_9P1_1.pdf}
	\subfigimg[width=0.3\textwidth]{b}{mutualmagicMN80D10m3p40J1h1g0S150P1C1R4c3T1a0_5_hD41p0_9P1_1.pdf}
	\caption{Magic density $m_2$ and mutual 2-SRE $\mathcal{I}_2$ close to criticality of the TFIM in local basis rotated with single-qubit unitary $V_z(\theta)=\exp(-i\sigma^z\theta/2)$ with $\theta=\pi/4$.
	\idg{a} $m_2$ against $h$ for $\chi=10$. 
	\idg{b} $\mathcal{I}_2$ with equal bipartition $N/2$ against $h$.
	} 
	\label{fig:mutual_TFIM_z}
\end{figure*}

This feature is independent of the local basis. For $m_2$ depending on the chosen local basis $V$ there is no clear peak towards $h=1$ as seen in Fig.~\ref{fig:mutual_TFIM_V}a (for detailed study see Ref.~\cite{haug2022quantifying}). In contrast, $\mathcal{I}_2$ has a clear extremum that converges towards $h=1$ for large $N$ as seen in Fig.~\ref{fig:mutual_TFIM_V}b.

\begin{figure*}[htbp]
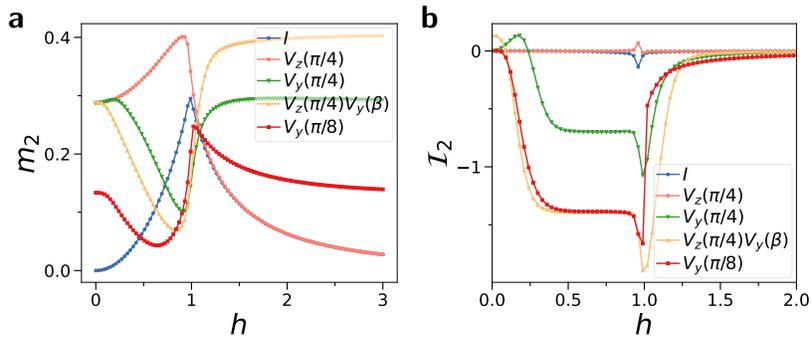

	\centering
        \subfigimg[width=0.3\textwidth]{a}{magicMgcN80D6m3p40J1h0_9882g0S150P1C1R4T7a0_5_hD101p0_0P3_0.pdf}
    	\subfigimg[width=0.3\textwidth]{b}{mutualmagicMgcN80D6m3p40J1h0_9882g0S150P1C1R4T7a0_5_hD101p0_0P3_0.pdf}
	\caption{Magic density $m_2$ and mutual 2-SRE $\mathcal{I}_2$ close to criticality of the TFIM in various local basis $V$. Here, $\beta=\arccos(1/\sqrt{3})$ which transforms into the basis of maximal single-qubit magic.
    \idg{a} $m_2$ for various different local basis against $h$ with $N=80$ and $\chi=6$.
    \idg{b} $\mathcal{I}_2$ for various different local basis against $h$ with $N=80$ and $\chi=6$.
	} 
	\label{fig:mutual_TFIM_V}
\end{figure*}

Finally, we note that the mutual von-Neumann SRE $\mathcal{I}_1^{[q]}$ can also be used as basis-independent indicator of the critical point. In Fig.~\ref{fig:neumannising}, we show von-Neumann SRE density $m_1$ (Fig.~\ref{fig:neumannising}a), the magic capacity $C_M$ (Fig.~\ref{fig:neumannising}b) and $\mathcal{I}_1^{[q]}$ (Fig.~\ref{fig:neumannising}c) for different rotated basis. In Fig.~\ref{fig:neumannising}c, We observe a distinct sharp peak for $\mathcal{I}_1^{[q]}$ close to the critical point $h=1$ for all chosen basis, which can serve to characterize the critical point by a scaling analysis.

\begin{figure*}[htbp]
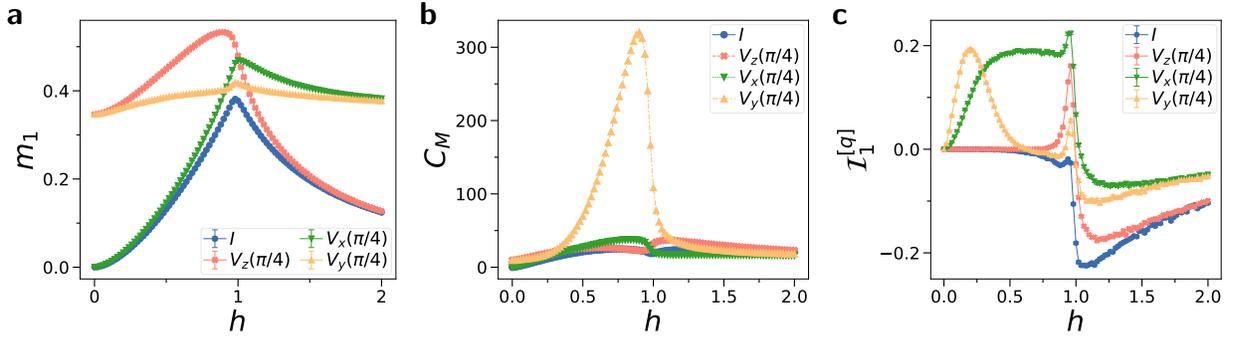

	\centering
	\subfigimg[width=0.3\textwidth]{a}{meanshannonResPN80D20m3p40J1h0g0S100P0C1R-1c11M100000_hD101p0_0P2_0.pdf}
	\subfigimg[width=0.3\textwidth]{b}{varshannonbareN2ResPN80D20m3p40J1h0g0S100P0C1R-1c11M100000_hD101p0_0P2_0.pdf}
    	\subfigimg[width=0.3\textwidth]{c}{meanshannonmutualResPN80D20m3p40J1h0g0S100P0C1R-1c11M100000_hD101p0_0P2_0.pdf}
	\caption{Magic of groundstate of TFIM against field $h$ for different basis $V_\alpha$. \idg{a} Von-Neumann magic density $m_1$ \idg{b} Magic capacity $C_M$ \idg{c} Mutual von-Neumann SRE $\mathcal{I}_1^{[q]}$. We have $N=80$ qubits and $K=10^5$ Pauli samples.
	} 
	\label{fig:neumannising}
\end{figure*}

\end{document}